\documentclass[sigconf,screen]{acmart}
\pdfoutput=1

\copyrightyear{2023}
\acmYear{2023}
\setcopyright{rightsretained}
\acmConference[SIGIR '23]{Proceedings of the 46th International ACM SIGIR Conference on Research and Development in Information Retrieval}{July 23--27, 2023}{Taipei, Taiwan}
\acmBooktitle{Proceedings of the 46th International ACM SIGIR Conference on Research and Development in Information Retrieval (SIGIR '23), July 23--27, 2023, Taipei, Taiwan}\acmDOI{10.1145/3539618.3591659}
\acmISBN{978-1-4503-9408-6/23/07}

\settopmatter{printacmref=true}

\usepackage{booktabs} 
\usepackage{graphicx}
\usepackage{ifthen}

\usepackage[utf8]{inputenc} 
\usepackage[T1]{fontenc}    
\usepackage{url}            
\usepackage{booktabs}       
\usepackage{amsfonts}       
\usepackage{dsfont}

\usepackage{nicefrac}       
\usepackage{microtype}      
\usepackage{amsmath}
\usepackage{amsthm}
\usepackage{bm}
\usepackage{subcaption}
\usepackage{mathrsfs}
\usepackage{xcolor}
\usepackage{here}

\usepackage{enumitem}
\usepackage{algorithm}
\usepackage{algcompatible}
\algnewcommand\algorithmicreturn{\textbf{return }}
\algnewcommand\RETURN{\State \algorithmicreturn}%

\usepackage{multirow}
\usepackage{hyperref}
\usepackage[colorinlistoftodos,prependcaption]{todonotes}
\usepackage{natbib}
\usepackage{balance}
\usepackage{mathtools}
\newcommand{\norm}[1]{\left\lVert #1 \right\rVert}
\DeclarePairedDelimiterX{\ind}[1]{\mathbb{I}\{}{\}}{#1}

\DeclareMathOperator{\PC}{\mathcal{PC}}
\DeclareMathOperator{\conv}{conv}
\DeclareMathOperator{\nrank}{\mathsf{nperm}}

\newcommand{\calP}{\mathcal{P}}

\newcommand{\calA}{\mathcal{A}}

\newcommand{\calU}{\mathcal{U}}
\newcommand{\calV}{\mathcal{V}}

\newcommand{\calO}{\mathcal{O}}
\newcommand{\calB}{\mathcal{B}}

\newcommand{\calF}{\mathcal{F}}
\newcommand{\calQ}{\mathcal{Q}}
\newcommand{\calS}{\mathcal{S}}
\newcommand{\calH}{\mathcal{H}}

\newcommand{\R}{\mathbb{R}}
\newcommand{\w}{\mathbf{w}}

\newcommand{\x}{\mathbf{x}}

\newcommand{\p}{\mathbf{p}}

\newcommand{\z}{\mathbf{z}}

\newcommand{\W}{\mathbf{W}}

\newcommand{\bs}{\mathbf{s}}

\newcommand{\bv}{\mathbf{v}}

\newcommand{\bq}{\mathbf{q}}
\newcommand{\bh}{\mathbf{h}}
\newcommand{\be}{\mathbf{e}}
\newcommand{\bp}{\mathbf{p}}
\newcommand{\bn}{\mathbf{n}}
\newcommand{\bl}{\mathbf{l}}
\usepackage[nameinlink]{cleveref}
\crefname{theorem}{Theorem}{Theorems}
\crefname{lemma}{Lemma}{Lemmas}
\crefname{claim}{Claim}{Claims}
\crefname{remark}{Remark}{Remarks}
\crefname{observation}{Observation}{Observations}
\crefname{corollary}{Corollary}{Corollaries}
\crefname{appendix}{Appendix}{Appendices}
\crefname{section}{Section}{Sections}
\crefname{algorithm}{Algorithm}{Algorithms}
\crefname{equation}{Eq.}{Eqs.}
\crefname{figure}{Figure}{Figures}
\crefname{table}{Table}{Tables}

\newtheorem{theorem}{Theorem}[section]
\newtheorem{claim}[theorem]{Claim}
\newtheorem{observation}[theorem]{Observation}
\newtheorem{remark}[theorem]{Remark}

\usepackage{tikz}
\usetikzlibrary{positioning}
\usetikzlibrary{decorations.markings}
\usetikzlibrary{shapes.geometric}
\usetikzlibrary{arrows.meta}
\usetikzlibrary{arrows,automata,decorations.pathmorphing,fit,positioning,arrows.meta,calc,decorations.text}
\usetikzlibrary{patterns}
\usetikzlibrary{intersections}

\clubpenalty=10000
\widowpenalty=10000
\displaywidowpenalty=10000

\settopmatter{printfolios=true}
\settopmatter{printacmref=true}

\renewcommand{\paragraph}[1]{%
\noindentparagraph{\textbf{\textup{#1.}}}
}%

\begin{document}
\title{
Curse of ``Low'' Dimensionality in Recommender Systems
}


\author{Naoto Ohsaka}
\affiliation{
  \institution{CyberAgent, Inc.}
  \city{Tokyo} 
  \country{Japan}
}
\email{ohsaka_naoto@cyberagent.co.jp}
\orcid{0000-0001-9584-4764}
\authornote{Equal contribution: \url{https://www.aeaweb.org/journals/policies/random-author-order/search?RandomAuthorsSearch[search]=VQXAE0BZ6P_I}}

\author{Riku Togashi}
\authornotemark[1]
\affiliation{
  \institution{CyberAgent, Inc.}
  \city{Tokyo} 
  \country{Japan}
}
\email{rtogashi@acm.org}
\orcid{0000-0001-9026-0495}

\makeatletter
\gdef\@copyrightpermission{
  \begin{minipage}{0.3\columnwidth}
   \href{https://creativecommons.org/licenses/by/4.0/}{\includegraphics[width=0.90\textwidth]{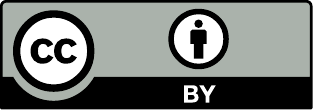}}
  \end{minipage}\hfill
  \begin{minipage}{0.7\columnwidth}
   \href{https://creativecommons.org/licenses/by/4.0/}{This work is licensed under a Creative Commons Attribution International 4.0 License.}
  \end{minipage}
  \vspace{5pt}
}
\makeatother

\begin{abstract}
Beyond accuracy, there are a variety of aspects to the quality of recommender systems, such as diversity, fairness, and robustness.
We argue that many of the prevalent problems in recommender systems are partly due to \emph{low-dimensionality} of user and item embeddings, particularly when dot-product models, such as matrix factorization, are used.

In this study, we showcase empirical evidence suggesting the necessity of sufficient dimensionality for user/item embeddings to achieve diverse, fair, and robust recommendation.
We then present theoretical analyses of 
the expressive power of dot-product models.
Our theoretical results demonstrate that
the number of possible rankings expressible under dot-product models is exponentially bounded by the dimension of item factors.
We empirically found that the low-dimensionality contributes to a popularity bias, widening the gap between the rank positions of popular and long-tail items; we also give a theoretical justification for this phenomenon.
\end{abstract}

\begin{CCSXML}
<ccs2012>
   <concept>
       <concept_id>10002951.10003260.10003261.10003269</concept_id>
       <concept_desc>Information systems~Collaborative filtering</concept_desc>
       <concept_significance>500</concept_significance>
       </concept>
 </ccs2012>
\end{CCSXML}

\ccsdesc[500]{Information systems~Collaborative filtering}

\keywords{
recommender systems;
dot-product models;
fairness;
diversity;
}

\settopmatter{printfolios=false}
\maketitle
\section{Introduction}

Matrix factorization (MF)~\cite{hu2008collaborative} is a standard tool in recommender systems.
MF can learn compact, retrieval-efficient representation and thus is easy-to-use, particularly in large-scale applications.
On the other hand, due to the advancement of deep learning-based methods, complex nonlinear models have also been adopted to enhance recommendation quality~\cite{steck2021deep}.
However, despite the advances in model architecture, 
most models share a common structure, i.e., \emph{dot-product models}~\cite{rendle2020neural}.
Dot-product models are a class of models that estimate the preference for a user-item pair by computing a dot-product (inner product) between the user and item embeddings; MF is the simplest model in this class.
This structure is essential for large-scale applications due to its computationally efficient retrieval through vector search algorithms~\cite{jegou2010product,shrivastava2014asymmetric,malkov2018efficient}.

The \emph{dimensionality} of user/item embeddings characterizes dot-product models.
One interpretation of dimensionality is that it refers to the ranks of user/item embedding matrices that minimize the errors between true and estimated preferences.
In the extreme case where the dimensionality is one, the embedding of each user and item degenerates into a scalar.
Notice here that the rankings estimated from a feedback matrix also degenerate into two unique rankings, namely, the popularity ranking and its reverse; for ranking prediction, a one-dimensional user embedding determines only the signatures of preference scores.
Generalizing this, we arrive at a curious question: When the dimensionality in a dot-product model is low or high, what do the rankings look like?

Previous studies~\cite{zhou2008large,pilaszy2010fast,koren2022advances} have reported the effectiveness of large dimensionalities in rating prediction tasks.
\citet{rendle2021revisiting} also recently showed that high-dimensional models can achieve very high ranking accuracy under appropriate regularization.
Furthermore, successful models in top-$K$ item recommendation, such as variational autoencoder (VAE)-based models~\cite{liang2018variational}, often use large dimensions.
These observations are somewhat counterintuitive since user feedback data are typically sparse and thus may lead to overfitting under large dimensionality.
On the other hand, conventional studies of state-of-the-art methods often omit the investigation of dimensionality in models (e.g., \cite{liang2018variational,sachdeva2019sequential,he2020lightgcn,shenbin2020recvae,xie2021adversarial}).
Although exhaustive tuning of hyperparameters for model sizes (e.g., the number of hidden layers and the size of each layer) is unrealistic due to the experimental burden,
the dimensionality of user and item embeddings is rather unnoticed compared with other hyperparameters, such as learning rates and regularization weights.
This further stimulates our interest above.

In this study, we investigate the effect of dimensionality in recommender systems based on dot-product models.
We first present empirical observations from various viewpoints on recommendation quality, i.e., personalization, diversity, fairness, and robustness.
Our results reveal a hidden side effect of low-dimensionality: low-dimensional models incur a low model capacity with respect to these quality requirements even when the ranking quality seems to be maintained.
In the convention of machine learning, we can often avoid model overfitting by using low-dimensional models.
However, such models would suffer from potential long-term negative effects, namely, overfitting toward popularity bias.
Consequently, low-dimensionality leads to nondiverse, unfair recommendation results and thus insufficient data collection for producing models that can properly delineate users' individual tastes.
Furthermore, we theoretically explain the cause of the observed phenomenon, \emph{curse of low dimensionality}.
Our theoretical results---which apply to dot-product models---provide evidence
that increasing dimensionality exponentially improves the expressive power of dot-product models in terms of the number of expressible rankings, and
that we may not completely circumvent popularity bias.
Finally, we discuss possible research directions based on the lessons learned.

\begin{figure*}[h]
     \centering
    \includegraphics[keepaspectratio, width=0.85\linewidth]{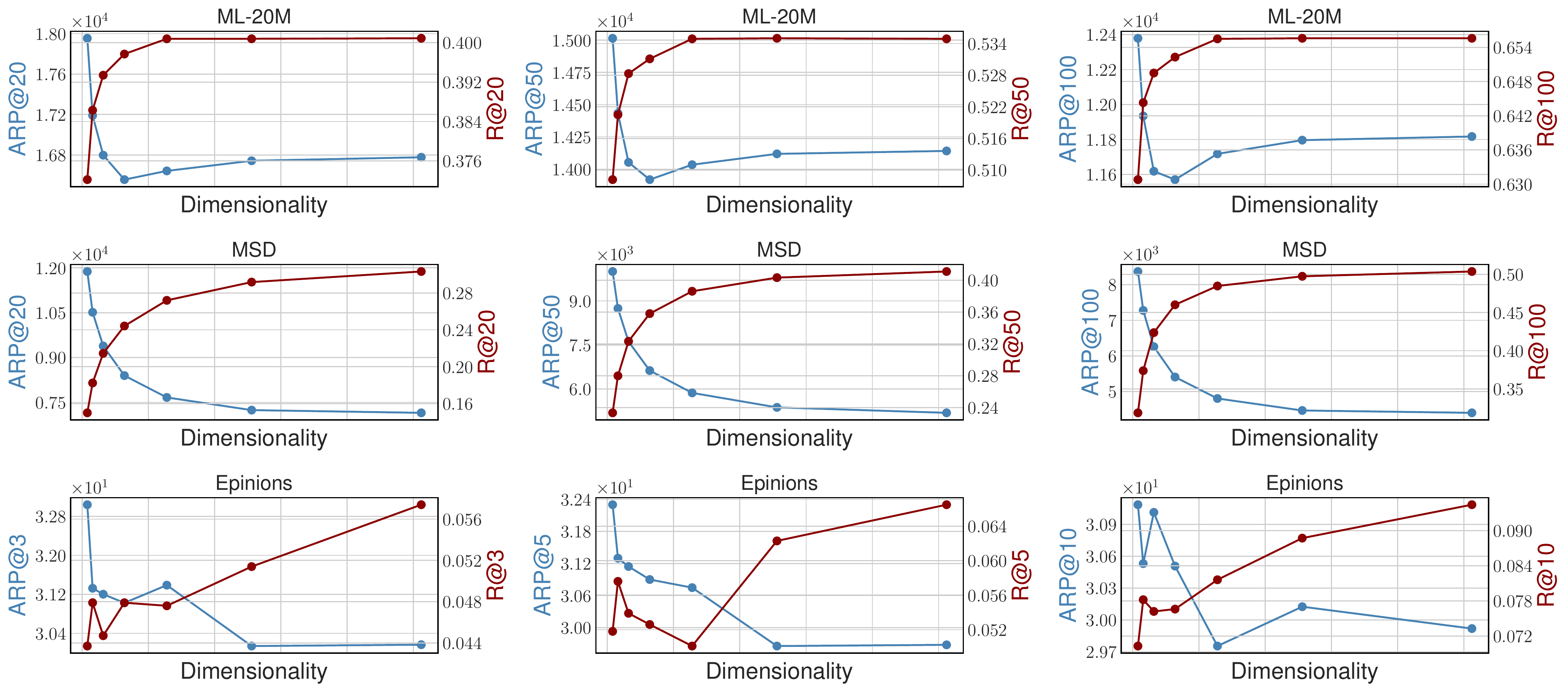}
    \caption{Effect of the dimensionality of iALS on popularity bias in recommendation results.
    }
    \label{fig:popularity-bias}
\end{figure*}

\section{Preliminary: Dot-Product Models}
In this section, we briefly describe dot-product models.
Many practical models are classified as dot-product models~\cite{rendle2020neural} (also known as two-tower models~\cite{yang2020mixed}), which estimate the preference $r_{u,v} \in \R$ of a user $u \in \calU$ for an item $v \in \calV$ by an inner-product between the embeddings of $u$ and $v$ as follows:
\begin{align*}
  \hat{r}_{u,v} = \langle \phi(u), \psi(v)\rangle,
\end{align*}
where $\phi(u) \in \R^d$ and $\psi(v) \in \R^d$ are the embeddings of $u$ and $v$, respectively.
The design of the feature mappings $\phi \colon \calU \to \R^d$ and $\psi \colon \calV \to \R^d$ depends on the overall model architecture; $\phi$ and $\psi$ can be arbitrary models, such as MF and neural networks.

(V)AEs can also be interpreted as dot-product models~\cite{liang2018variational,shenbin2020recvae}.
Most AE-based models
take partial user feedback ($|\calV|$-dimensional multi-hot vector) corresponding to one user as input and
have a fully-connected layer to make a final score prediction for $|\calV|$ items.
Denoting the $d$-dimensional intermediate representation of a partial user feedback simply by $\bq \in \R^{d}$, 
the $|\calV|$-dimensional structured prediction $\z \in \R^{|\calV|}$ can be expressed as follows:
\begin{align*}
   \z = \mathbf{f}(\W\bq + \mathbf{b}),
\end{align*}
where $\W \in \R^{|\calV| \times d}$ and $\mathbf{b} \in \R^{|\calV|}$ are the weight matrix and the bias in the fully-connected layer, respectively.
Here, $\mathbf{f} \colon \R^{|\calV|} \to \R^{|\calV|}$ is the activation function (e.g., softmax) that is order-preserving\footnote{Formally, we can say $\mathbf{f}$ is order-preserving if $\mathbf{f}$ satisfies, for any $\x \in \R^{|\calV|}$ and $\forall i,j \in \calV$ such that $i \neq j$, it holds that $\x_i > \x_j \Longrightarrow \mathbf{f}(\x)_i > \mathbf{f}(\x)_j$.}.
Given an auxiliary vector $\phi(u)=[\bq; 1] \in \R^{d+1}$ with an additional dimension of $1$ and an auxiliary matrix $\W'=[\W \,\, \mathbf{b}] \in \R^{|\calV| \times (d+1)}$ with an additional column of $\mathbf{b}$,
the predicted ranking is derived by the order of $\W'\phi(u) \in \R^{|\calV|}$ because $\mathbf{f}$ is order-preserving and does not affect the ranking prediction.
Therefore, each row of $\W'$ can be viewed as the corresponding item embedding $\psi(v)$, and thus the prediction for $u$ is also derived from the order of the dot products $\{\langle \phi(u), \psi(v) \rangle \mid v \in \calV\}$.
This point is often unregarded in empirical evaluation; for example, \citet{liang2018variational} used a large dimensionality of $600$ for VAE-based models for the Million Song Dataset and Netflix Prize; by comparison, the MF-based baseline~\cite{hu2008collaborative} employed a dimensionality of only $200$.

\section{Empirical Observation}
\label{sec:empirical}
We first present empirical observations that establish the motivation behind our theoretical analysis and discuss the possible effects of dimensionality on recommendation quality in terms of various aspects, such as personalization, diversity, item fairness, and robustness to biased feedback.

\subsection{Experimental Setting}
For our experiments, we use implicit alternating least squares (iALS)~\cite{hu2008collaborative,koren2009matrix,rendle2021revisiting}, which is widely used in practical applications and implemented in distributed frameworks, such as Apache Spark~\cite{meng2016mllib}.
Because iALS slows down in high-dimensional settings\footnote{This point is discussed in \cref{sec:discussion}}, we use a recently developed block coordinate descent solver~\cite{rendle2021ials++}.

We conduct experiments on three real-world datasets from various applications, i.e., MovieLens 20M (ML-20M)~\cite{harper2015movielens}, Million Song Dataset (MSD)~\cite{bertin2011million}, and Epinions~\cite{massa2007trust}.
To create implicit feedback datasets, we binarize explicit data by keeping observed user-item pairs with ratings larger than four for ML-20M and Epinions.
We utilize all user-item pairs for MSD.
For Epinions, we keep the users and items with more than 20 interactions as in conventional studies~\cite{abdollahpouri2017controlling,abdollahpouri2019managing}.
We also strictly follow the evaluation protocol of \citet{liang2018variational} based on a strong generalization setting.

\setcounter{figure}{2}
\begin{figure*}[t]
    \centering
    \includegraphics[keepaspectratio, width=0.85\linewidth]{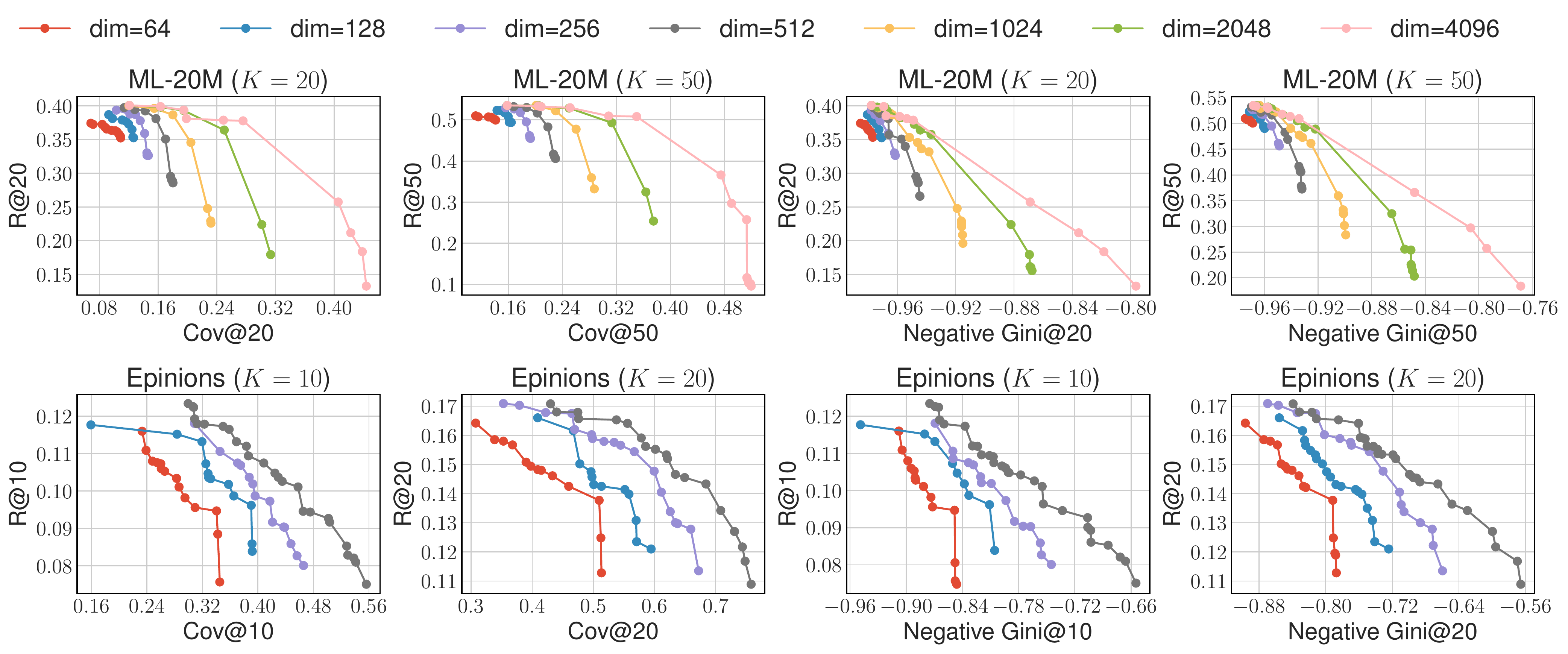}
    \caption{Effect of the dimensionality of iALS on catalog coverage and item fairness on ML-20M. Each line indicates the Pareto frontier of R@$K$ vs.~Cov@$K$ (top row) or R@$K$ vs.~Negative Gini@$K$ (bottom row) of models with fixed dimensionality.}
    \label{fig:coverage-and-fairness}
\end{figure*}

\subsection{Personalization and Popularity Bias}
\label{sec:personalization}
Personalization is the primary aim of a recommendation system, which requires a system to adapt its predictions for the users considering their individual tastes.
By contrast, the most-popular recommender is known to be an empirically strong yet \emph{anti-personalized} baseline; that is, it recommends an identical ranking of items for all users~\cite{ferrari2019we}.
Therefore, the degree of personalization in the predicted rankings may be considered as that of popularity bias~\cite{celma2008hits,yin2012challenging,abdollahpouri2017controlling,abdollahpouri2019managing}.

\cref{fig:popularity-bias} shows the personalization measure (i.e., Recall@$K$) and the average recommendation popularity (ARP@$K$)~\cite{yin2012challenging} for iALS models with various embedding sizes.
Here, ARP@$K$ is the average of item popularity (i.e., empirical frequency in the training split) for a top-$K$ list.
We tune $d \in \{64, 128, \dots, 4{,}096\}$ for ML-20M and MSD and $d \in \{8, 16, \dots, 512\}$; because the numbers of items after preprocessing are $20{,}108$ for ML-20M, $41{,}140$ for MSD, and $3{,}423$ for Epinions, we tune $d$ in different ranges to ensure the low-rank condition of MF.
It can be observed that, for all settings, the models with small dimensionalities exhibit extremely large ARP@$K$ values (see blue lines).
Particularly, at the top of the rankings, the popularity bias in the prediction is severe (shown by the leftmost figures in the top and bottom rows).
These results suggest that low-dimensional models recommend many popular items and, therefore, suffer from anti-personalization.
Furthermore, insufficient dimensionality leads to low achievable ranking quality.

On the other hand, the trends exhibited by high-dimensional models are rather different for ML-20M vs. for MSD and Epinions.
Although the ranking quality becomes saturated at a relatively low dimensionality of $1{,}024$ in ML-20M for all $K$, the quality on MSD and Epinions can be further improved with high dimensionality.
Moreover, the popularity bias in ML-20M is the lowest with $d=512$, which is not optimal in terms of ranking quality, whereas the best one in terms of ranking quality for MSD also performs the best in terms of popularity bias.
This is possibly because the popularity bias is more severe for ML-20M than for MSD; thus, reconciling high quality and low bias is difficult for ML-20M.
To confirm this,
\cref{fig:popularity-distribution} illustrates the distribution of item popularity in ML-20M, MSD, and Epinions.
The y-axis represents the normalized frequency of each item (i.e., relative item popularity) in the training dataset.
The popularity bias in ML-20M is more intense than that in MSD; there are many ``long-tail'' items in ML-20M.
Remarkably, although the results differ between ML-20M and MSD, we can draw the same conclusion: sufficient dimensionality is necessary to alleviate the existing popularity bias in ML-20M or to represent highly personalized recommendation results for MSD and Epinions.

\setcounter{figure}{1}
\begin{figure}[t]
    \centering
    \includegraphics[keepaspectratio, width=0.99\linewidth]{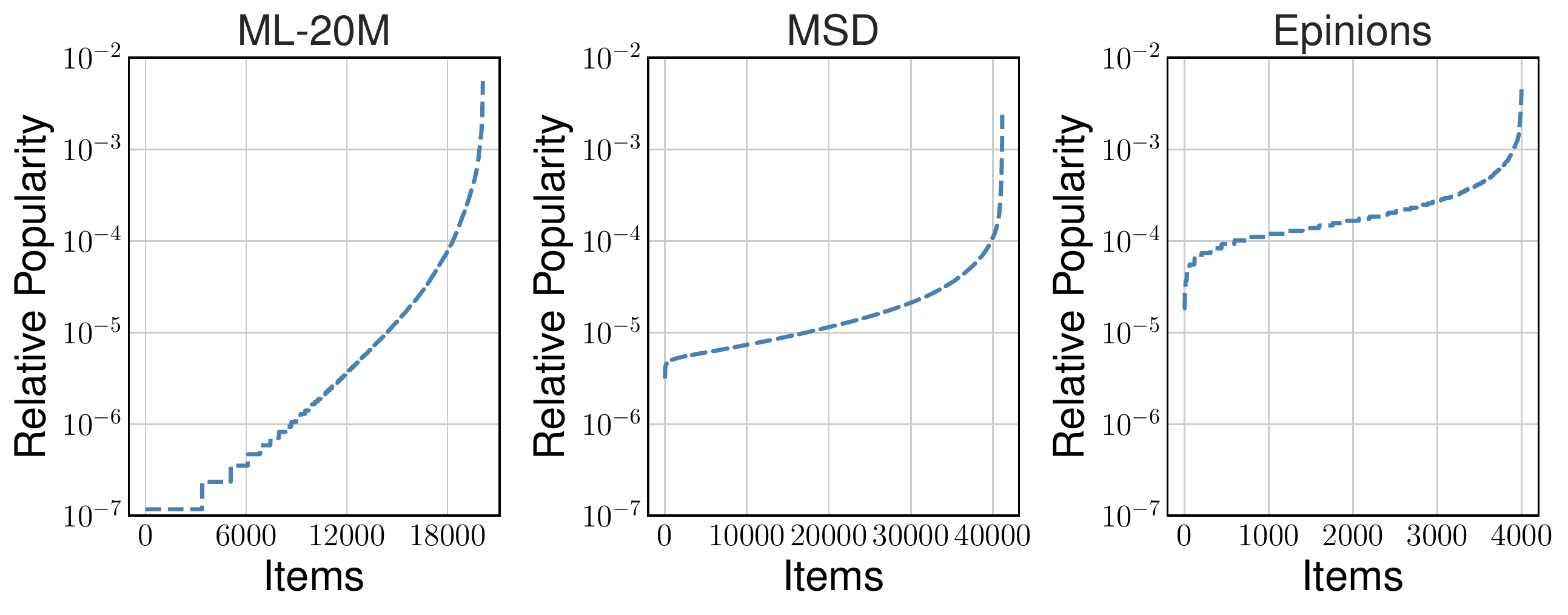}
    \caption{Distributions of relative item popularity.}
    \label{fig:popularity-distribution}
\end{figure}

\setcounter{figure}{3}
\begin{figure*}[th]
    \centering
    \includegraphics[keepaspectratio, width=0.9\linewidth]{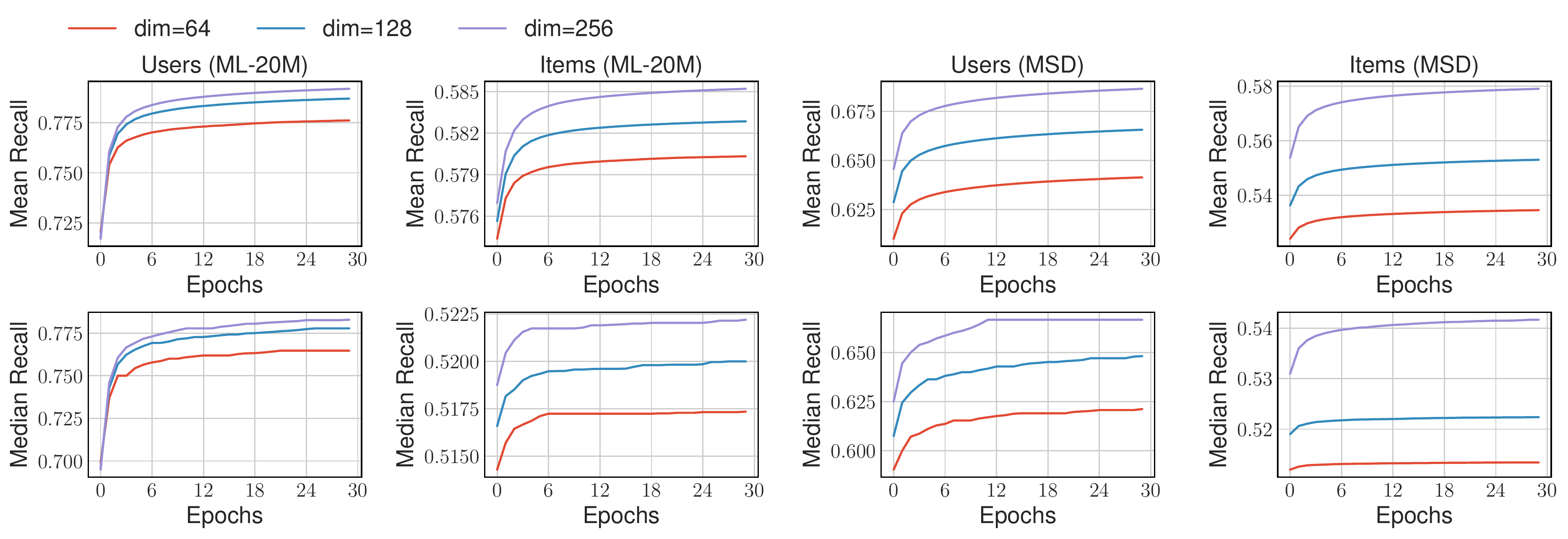}
    \caption{Effect of the dimensionality on data collection efficiency.}
    \label{fig:data-collection}
\end{figure*}

\begin{figure}[th]
    \centering
    \includegraphics[keepaspectratio, width=0.9\linewidth]{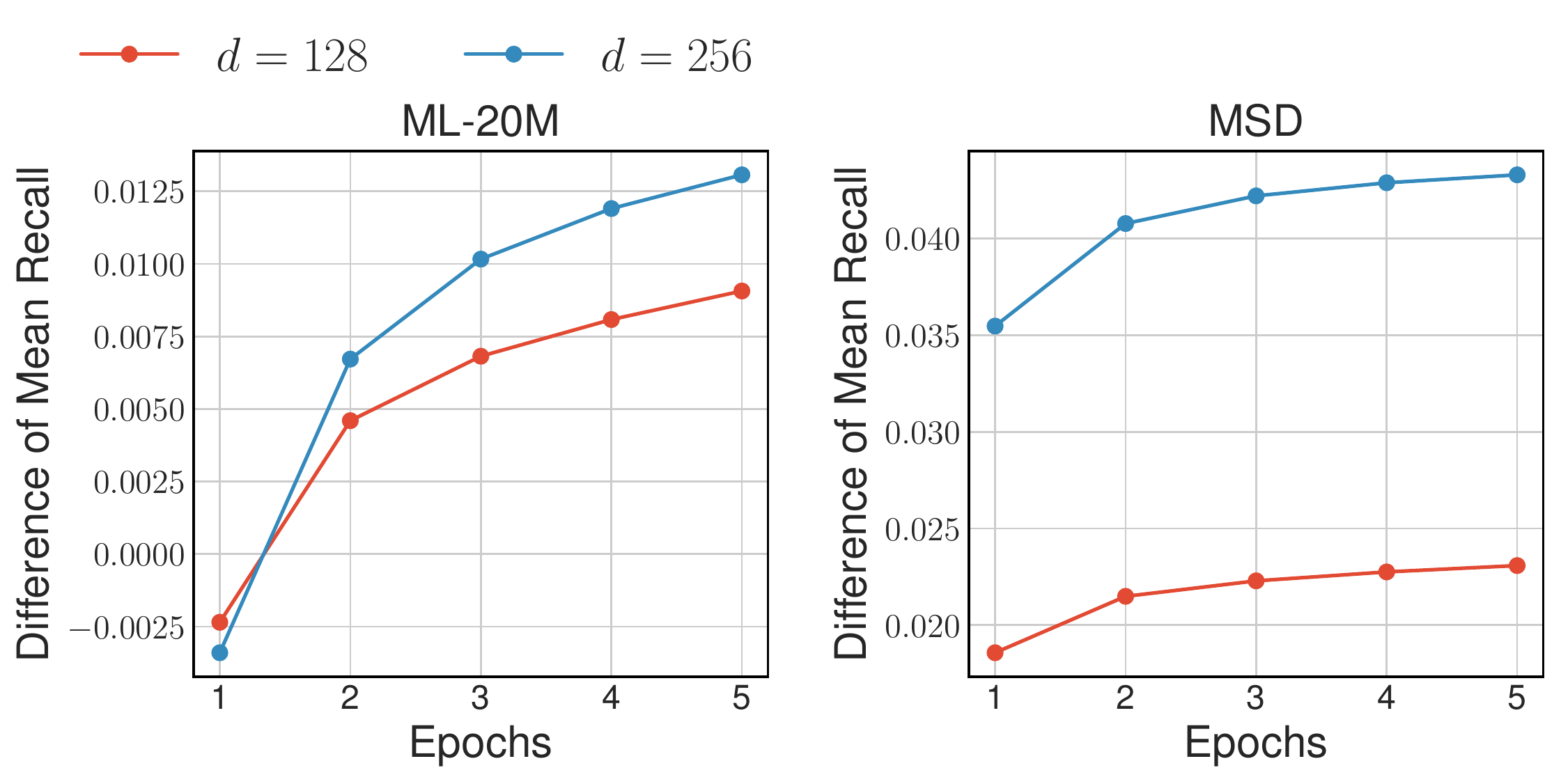}
    \caption{Improvement of models with $d=128,256$ from that with $d=64$ in terms of mean recall for users.}
    \label{fig:data-collection-recall-difference}
\end{figure}

\subsection{Diversity and Item Fairness}
\label{sec:diversity-and-fairness}
Catalog coverage is one of the concepts related to the diversity of recommendation results (i.e., aggregate diversity)~\cite{herlocker2004evaluating,adomavicius2009toward,adomavicius2011improving,adomavicius2014optimization} and refers to the proportion of items to be recommended.
This can be viewed as the capacity of an effective item catalog under a recommender system.
On the other hand, item fairness is an emerging concept similar to catalog coverage, yet it applies to different situations and requirements~\cite{burke2017multisided}.
When each item belongs to a certain user, the recommendation opportunity for the items is part of the user utility; for instance, in an online dating application where each user corresponds to an item, users (as items) can obtain more chances of success when they are recommended more frequently.

\cref{fig:coverage-and-fairness} shows the effect of dimensionality on catalog coverage and item fairness for ML-20M and Epinions.
Each curve indicates the Pareto frontier of an iALS model that corresponds to a certain dimensionality with various hyperparameter settings; 
we omitted this experiment for the large-scale MSD because of the experimental burden.
We use Coverage@$K$ (Cov@$K$) and Negative Gini@$K$ as the measures of catalog coverage and item fairness, respectively.
Cov@$K$ is the proportion of items retrieved that are in the top-$K$ results at least once~\cite{herlocker2004evaluating}, and Negative Gini@$K$ is the negative Gini index~\cite{atkinson1970measurement} of items' frequencies in the top-$K$ results\footnote{It can be defined as $\text{Negative Gini@}K= (2\norm{\mathbf{c}}_1|\calV|^2)^{-1}\sum_{(j,l) \in \calV \times \calV}|c_j - c_l|$, where $\mathbf{c} \in \mathbb{Z}_+^{|\calV|}$ is the vector of item frequencies in the top-$K$ list, and $c_j$ is the frequency of item $j \in \calV$.}.

A clear trend can be observed for all settings:
models with larger dimensions achieve higher capacities in terms of both catalog coverage and item fairness.
This result implies that low-dimensional models cannot produce diverse or fair recommendation results.
Notably, there exists a pair of models that are almost equivalent in terms of ranking quality (i.e., R@$K$) but substantially different in catalog coverage or item fairness.
This suggests a serious potential pitfall in practice.
When developers evaluate and select models based only on ranking quality, a reasonable choice may be to use a low-dimensional model due to the space cost efficiency.
However, such a model can lead to extremely nondiverse and unfair recommendation results.
Even when developers select models based on both ranking quality and diversity, 
the versatility of models is severely limited if the dimensionality parameter is tuned for a narrow range of values owing to computational resource constraints.

\subsection{Self-Biased Feedback Loop}
\label{sec:popularity-and-data-collection}
To capture the dynamic nature of user interests, a system usually retrains a model after observing data within a certain time interval.
By contrast, hyperparameters are often fixed for model retraining because hyperparameter tuning is a costly process, particularly when models are frequently updated.
Hence, the robustness of deployed models (including hyperparameters) to dynamic changes in user behavior is critical.
This may also be related to unbiased recommendation~\cite{chen2020bias} or batch learning from logged bandit feedback~\cite{swaminathan2015batch}.
Because user feedback is collected under the currently deployed system, item popularity is formed in part by the system itself.
Therefore, when a system narrows down its effective item catalog, as demonstrated in \cref{sec:diversity-and-fairness}, the data observed in the future concentrate on items that are frequently recommended by the system.
This phenomenon accelerates popularity bias in the data and further increases the number of cold-start items.

To observe the effect of dimensionality on data collection in a training and observation loop,
we repeatedly train and evaluate an iALS model with different dimensionalities on ML-20M and MSD.
Following a weak generalization setting, we first observe 50\% of the feedback for each user in the original dataset.
We then train a model using all samples in the observed dataset and predict the top-50 rankings for all users, removing the observed user-item pairs from the rankings.
Subsequently, we observe the positive pairs in the predicted rankings as an additional dataset for the next model training.
In the evaluation phase, we compute the proportion of observed samples for each user; we simply call this measure \emph{recall} for users.
Furthermore, we also compute this recall measure for items to determine the degree to which the system can collect data for each item.

\cref{fig:data-collection} shows the evolution of the recall measures for users and items.
For both ML-20M and MSD, models with higher dimensionalities achieve more efficient data collection for both users and items.
Remarkably, the difference is substantial in the data collection for items; in the figures in the second and fourth columns, a much larger efficiency gap can be observed between the high- and low-dimensional models.
Furthermore, this trend is emphasized for MSD, which has a larger item catalog than ML-20M (as shown by the figure in the fourth column of the bottom row).
The results with $d=64$ for MSD in terms of \emph{median} recall for users and items are remarkable, as shown by the red lines in the third and fourth columns of the bottom row.
The model with $d=64$ can collect data from users to some extent (third column), whereas the data come from a small proportion of items (fourth column).
Here, \cref{fig:data-collection-recall-difference} illustrates the performance gap (mean recall for users) between models with $d=64$ and $d=128,256$.
Although the gap in the first epoch is relatively small for each setting, it grows in the next few epochs owing to the different efficiency of data collection;
interestingly, in ML-20M, the models with $d=128,256$ deteriorate from that with $d=64$ only in the first epoch.
The performance gain is emphasized particularly with larger $d$ values.
These results imply that the gap between low- and high-dimensional models may become tangible, particularly in a training and observation loop, whereas the evaluation protocol in academic research simulates only the first epoch.

In summary, the results obtained in this part of the study are in good agreement with those presented in the previous sections.
Dimensionality determines the capacity in terms of various aspects of recommendation quality beyond accuracy.
As demonstrated earlier, deterioration in diversity, item fairness, and data collection ultimately affect long-term accuracy.

\subsection{Summary of Empirical Results}
Here, we summarize the empirical results obtained thus far.
We first evaluated the MF models (i.e., the simplest dot-product models) in the standard setting of item recommendation in \cref{sec:personalization}.
We observed that high-dimensional models tend to achieve high-ranking quality and low popularity bias in their predicted rankings.
To examine this phenomenon in depth, \cref{sec:diversity-and-fairness} investigated the relationship between dimensionality and diversity/fairness.
We obtained a clear trend showing that low-dimensionality limits the versatility of models in terms of diversity and item fairness rather than ranking quality; even when the other hyperparameters are tuned, the achievable performance in terms of diversity/fairness is in a narrow range.
In \cref{sec:popularity-and-data-collection}, we further investigated the effect of such a limited model space on data collection and long-term accuracy.
We simulated a practical situation wherein a system re-trains models by using incrementally 
observed data under its own recommendation policy.
The results suggest that the data collected under low-dimensional models are severely biased by the model itself and thus impede accuracy improvement in the future.
In the following section, we shall dive into the mechanism of this ``curse of low-dimensionality'' phenomenon.

\begin{figure}[t]
    \centering
    \scalebox{0.7}{\begin{tikzpicture}
    \tikzset{node/.style={circle, fill=black, font=\LARGE, text centered, inner sep=0, outer sep=0, minimum size=0.25cm}};
    \tikzset{dotted-edge/.style={ultra thick, densely dashed}};
    \tikzset{bold-edge/.style={very thick}};
    \tikzset{arrow-edge/.style={ultra thick, ->, >=Stealth}};


    \coordinate(a) at (-0.8,1.8);
    \coordinate(b) at (2.8,0.8);
    \coordinate(c) at (2.8,-2.0);
    \coordinate(d) at (-3.3,-2.0);
    \coordinate(O) at (0,0);
    \draw(a) node[node]{};
    \draw(b) node[node]{};
    \draw(c) node[node]{};
    \draw(d) node[node]{};
    \draw(O) node[node]{};
    \node[above=0.1cm of a] {\LARGE $\bv_1$};
    \node[right=0.1cm of b] {\LARGE $\bv_2$};
    \node[below=0.1cm of c] {\LARGE $\bv_3$};
    \node[left=0.1cm of d] {\LARGE $\bv_4$};
    \node at (-0.5,-0.2) {\LARGE $O$};

    \foreach \v/\w in {a/b, b/c, c/d, d/a, b/d, a/c}
        \draw[dotted-edge] (\v)--(\w);
    
    \foreach \v / \w in {a/b, b/c, c/d, d/a, b/d, a/c}
        \draw[bold-edge] ($(O)!-3cm!($(\v)!(O)!(\w)$)$)--($(O)!3cm!($(\v)!(O)!(\w)$)$);
    
    \foreach \v/\w in {a/b, b/c, c/d, d/a, b/d, a/c}
        \draw[semithick]
            let \p1 = ($(\v)!(O)!(\w)$)
            in
            ($(\p1)!0.2cm!(\v)$)--($(\p1)!0.2cm!(\v)!-0.2cm!90:(\v)$)--($(\p1)!-0.2cm!(O)$);

    \foreach \v/\w in {d/a}
        \draw[arrow-edge, color=blue!65!black]
            let \p1 = ($(\v)!(O)!(\w)$)
            in
            ($(\p1)!-0.5cm!(O)$)--($(\p1)!1cm!(\w)!0.5cm!90:(\w)$);

    \foreach \v/\w in {d/a}
        \draw[arrow-edge, color=red!65!black]
            let \p1 = ($(\v)!(O)!(\w)$)
            in
            ($(\p1)!-0.5cm!(O)$)--($(\p1)!1cm!(\v)!-0.5cm!90:(\v)$);

    \node[anchor=west, rotate=55.5, color=blue!65!black] at (-2.4,1.6) {$\langle \bq, \bv_1 \rangle > \langle \bq, \bv_4 \rangle$};
    \node[anchor=east, rotate=55.5, color=red!65!black] at (-2.4,1.6) {$\langle \bq, \bv_4 \rangle > \langle \bq, \bv_1 \rangle$};

\end{tikzpicture}}
    \caption{
        Illustration of the proof idea of \cref{thm:theory:rank-informal},
        which characterizes representable rankings of size $n$ by
        regions in hyperplane arrangements.
        There are four item vectors $\bv_1, \bv_2, \bv_3, \bv_4$ on $\R^2$.
        Each dashed line connects a pair of the four vectors;
        each bold line is orthogonal to some dashed line and intersects the origin.
        These bold lines generate twenty regions in total, each corresponds to a distinct ranking.
    }
    \label{fig:theory:rank-n}
\end{figure}
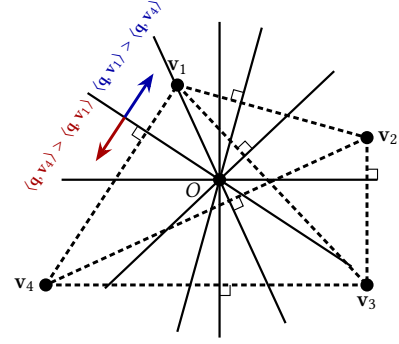

\section{Theoretical Investigation}
\label{sec:theory}
We present (a summary of) theoretical analyses on the expressive power of dot-product models
to support the empirical results provided in \cref{sec:empirical},
whose formal statements and proofs are deferred to \cref{app:theory}.
Our results are twofold and highly specific to dot-product models:
\paragraph{Bounding the Number of Representable Rankings (cf.~\cref{subsec:theory:bound})}
First, we investigate how many different rankings may be expressed over a fixed list of item vectors.
Our hypothesis is that
we suffer from popularity bias and/or cannot achieve diversity and fairness satisfactorily
owing to low expressive power.
In particular, 
we are interested in the number of representable rankings
parameterized by
the number of items $n$,
dimensionality $d$, and
size of rankings $K$.

Slightly formally, we say that a ranking $\pi$ of size $K$ is \emph{representable}
over $n$ item vectors $\bv_1, \ldots, \bv_n$ in $\R^d$
if
there exists a \emph{query vector} $\bq \in \R^d$ (e.g., a user embedding) such that
$\pi$ is consistent with
a ranking uniquely obtained by arranging $n$ items in descending order of $\langle \bq, \bv_i \rangle$.
We devise upper and lower bounds on the number of representable rankings,
informally stated as:
\begin{theorem}[informal; see \cref{thm:theory:rank-n,thm:theory:rank-d}]
\label{thm:theory:rank-informal}
The following holds:
\begin{itemize}[leftmargin=*]
\item \textbf{Upper bound}:
For every $n$ item vectors in $\R^d$,
the number of representable rankings of size $K$ over them is at most
$n^{\min\{K, 2d\}}$.
\item \textbf{Lower bound}:
There exist $n$ item vectors in $\R^d$ such that
the number of representable rankings of size $d$ over them 
is $n^{\Theta(d)}$.
\end{itemize}
\end{theorem}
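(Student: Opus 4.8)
The plan is to treat the two bounds by complementary techniques: the upper bound reduces the counting of representable rankings to counting cells of a central hyperplane arrangement determined by the pairwise differences of the item vectors, while the lower bound exhibits an explicit configuration (the moment curve) together with a family of query vectors realizing exponentially many distinct top-$d$ lists.

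For the upper bound I would first dispose of the trivial regime: there are at most $n(n-1)\cdots(n-K+1) \le n^K$ ordered sequences of $K$ distinct items, so the count is at most $n^K$ regardless of $d$. For the $n^{2d}$ term, the key observation is that the relative order of items $i$ and $j$ under a query $\bq$ is governed by the sign of $\langle \bq, \bv_i - \bv_j\rangle$. Hence I would introduce, for each unordered pair $\{i,j\}$, the hyperplane $H_{ij} = \{\bq \in \R^d : \langle \bq, \bv_i - \bv_j\rangle = 0\}$ through the origin; there are at most $\binom{n}{2}$ of them. On each open cell of the complement of $\bigcup_{i<j} H_{ij}$, every sign $\sgn\langle \bq, \bv_i-\bv_j\rangle$ is constant, so the induced full ranking is constant. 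Thus the number of distinct full rankings is at most the number of cells of an arrangement of $\binom{n}{2}$ hyperplanes in $\R^d$, which by the classical bound is at most $\sum_{i=0}^{d}\binom{\binom{n}{2}}{i} \le n^{2d}$. Since every representable size-$K$ ranking is the length-$K$ prefix of some full representable ranking, the same quantity bounds the size-$K$ count, and combining the two regimes yields $n^{\min\{K,2d\}}$. I would also note that ties, which occur only on the measure-zero set $\bigcup_{i<j} H_{ij}$, create no ranking beyond those already attained in the adjacent open cells.

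For the lower bound I would place the item vectors on the moment curve, $\bv_i = (1, t_i, t_i^2, \ldots, t_i^{d-1}) \in \R^d$ for distinct reals $t_1 < \cdots < t_n$, so that $\langle \bq, \bv_i\rangle = p(t_i)$, where $p$ is the polynomial of degree at most $d-1$ whose coefficient vector is $\bq$. Setting $k = \lfloor (d-1)/2\rfloor$, for any chosen $k$-subset $\{i_1,\ldots,i_k\}$ I would take $p(t) = -\prod_{l=1}^{k}(t - s_l)^2$ with each $s_l$ a small generic perturbation of $t_{i_l}$; this polynomial has degree $2k \le d-1$, is negative everywhere, and for $s_l$ sufficiently close to $t_{i_l}$ the $k$ chosen points attain values strictly larger than all remaining points. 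Consequently the top-$k$ set of the induced ranking is exactly $\{i_1,\ldots,i_k\}$, while the generic perturbation breaks all ties so that the full size-$d$ ranking is well defined. Because the top-$k$ set is recoverable from the first $k$ entries of a size-$d$ ranking, distinct choices of the $k$-subset yield distinct representable size-$d$ rankings, and there are $\binom{n}{k} = \binom{n}{\lfloor (d-1)/2\rfloor} = n^{\Theta(d)}$ of them.

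I expect the main obstacle to be the lower bound, specifically verifying that the peak polynomials genuinely force the intended items into the top positions: one must quantify how close the perturbed peaks $s_l$ may sit to the targets $t_{i_l}$ relative to the minimum gap between consecutive $t_i$, so that every chosen point strictly dominates every non-chosen point while ties among chosen points are still broken. The upper bound is comparatively routine once the reduction to arrangements is in place, the only care being the passage from full rankings to size-$K$ prefixes and the bookkeeping that turns the cell count $\sum_{i=0}^{d}\binom{\binom{n}{2}}{i}$ into the clean bound $n^{2d}$.
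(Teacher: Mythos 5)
Your proposal is correct, and it splits into one half that matches the paper and one half that takes a genuinely different route. The upper bound is essentially the paper's own proof of \cref{thm:theory:rank-n}: both arguments identify a representable full ranking with a feasible sign pattern of $\langle \bq, \bv_i - \bv_j\rangle$, i.e., with a region of the central arrangement of the $\binom{n}{2}$ hyperplanes orthogonal to the pairwise differences, and then pass from full rankings to $K$-prefixes (your perturbation remark about ties is the right way to justify that passage). The only difference is bookkeeping: the paper uses the sharper region count $2\sum_{0\le i\le d-1}\binom{\binom{n}{2}-1}{i}$ for arrangements sharing a common point, while you use the generic bound $\sum_{0\le i\le d}\binom{\binom{n}{2}}{i}$; both are at most $n^{2d}$, so nothing is lost. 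The lower bound is where you diverge. The paper proves \cref{thm:theory:rank-d} through a structural lemma (\cref{clm:theory:facet}): every facet of $\conv(\{\bv_1,\ldots,\bv_n\})$ yields a representable $d$-permutation by taking its outward normal as the query, and then McMullen's upper bound theorem plus cyclic polytopes supplies $\binom{n}{\lceil d/2\rceil}$ facets. You bypass facets and polytope theory entirely: on the moment curve, the peak polynomial $p(t) = -\prod_{l}(t-s_l)^2$ directly exhibits a query whose top-$k$ set is any prescribed $k$-subset. This is, in effect, the classical neighborliness argument for cyclic polytopes inlined into the ranking problem; it is more elementary and self-contained, whereas the paper's route buys a reusable structural statement (representability from facial structure) that it exploits again in \cref{ex:theory:facet} and in posing the open problem of exactly characterizing representable $d$-permutations.

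Two small repairs to your construction, both cosmetic. First, your exponent $k = \lfloor (d-1)/2\rfloor$ vanishes for $d \le 2$ and is one worse than the paper's $\lceil d/2\rceil$. Second, your points $(1, t_i, \ldots, t_i^{d-1})$ all lie on the hyperplane $x_1 = 1$, so they violate the paper's standing general-position convention. Both are fixed simultaneously by using the standard moment curve $\bv_i = (t_i, t_i^2, \ldots, t_i^d)$ and replacing your query polynomial $p$ by $p - p(0)$: subtracting a constant shifts every score equally and cannot change the ranking, the resulting polynomial has zero constant term and degree $2k \le d$, so it is realizable by a query vector, and you may take $k = \lfloor d/2\rfloor$ with points in general position.
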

Our upper and lower bounds imply that 
the maximum possible number of representable rankings of size $K$
is essentially $n^{\Theta(\min\{K,d\})}$,
offering the following insight:
\emph{increasing the dimensionality $d$ ``exponentially'' improves the expressive power of dot-product models}.
\cref{fig:theory:rank-n,fig:theory:rank-d}
illustrate the proof idea of our upper and lower bounds,
which characterize representable rankings by hyperplane arrangement and facets of a polyhedron,
respectively.

\begin{figure}[t]
    \centering
    \scalebox{0.7}{\begin{tikzpicture}
    \tikzset{node/.style={circle, fill=black, font=\LARGE, text centered, inner sep=0, outer sep=0, minimum size=0.25cm}};
    \tikzset{dotted-edge/.style={ultra thick, densely dashed}};
    \tikzset{bold-edge/.style={very thick}};
    \tikzset{arrow-edge/.style={ultra thick, ->, >=Stealth}};


    \coordinate(a) at (-0.8,1.8);
    \coordinate(b) at (2.8,0.8);
    \coordinate(c) at (2.8,-2.0);
    \coordinate(d) at (-3.3,-2.0);
    \coordinate(e) at (-0.9,-0.5);
    \coordinate(f) at (1.4,-0.8);
    \coordinate(O) at (0,0);
    \draw(a) node[node]{};
    \draw(b) node[node]{};
    \draw(c) node[node]{};
    \draw(d) node[node]{};
    \draw(e) node[node]{};
    \draw(f) node[node]{};
    \draw(O) node[node]{};
    \node[above=0.1cm of a] {\LARGE $\bv_1$};
    \node[right=0.1cm of b] {\LARGE $\bv_2$};
    \node[below=0.1cm of c] {\LARGE $\bv_3$};
    \node[left=0.1cm of d] {\LARGE $\bv_4$};
    \node[below=0.1cm of e] {\LARGE $\bv_5$};
    \node[below=0.1cm of f] {\LARGE $\bv_6$};
    \node[left=0.1cm of O] {\LARGE $O$};
    \node[text centered] at (0,-1.7) {\textcolor{black}{\LARGE $\calP = \conv(\{\bv_1, \bv_2, \bv_3, \bv_4, \bv_5, \bv_6\})$}};
    
    \foreach \v/\w in {a/b, b/c, c/d, d/a}
        \draw[dotted-edge] (\v)--(\w);
    

    \foreach \v/\w in {a/b, b/c, c/d, d/a}
        \draw[arrow-edge] ($(\v)!(O)!(\w)$)--($($(\v)!(O)!(\w)$)!-1cm!(O)$);
    
    \foreach \v/\w in {a/b, b/c, c/d, d/a}
        \draw[semithick]
            let \p1 = ($(\v)!(O)!(\w)$)
            in
            ($(\p1)!0.2cm!(\v)$)--($(\p1)!0.2cm!(\v)!-0.2cm!90:(\v)$)--($(\p1)!-0.2cm!(O)$);

\end{tikzpicture}}
    \caption{
        Illustration of the proof idea of \cref{thm:theory:rank-informal},
        which partially characterizes representable rankings of size $d$
        by facets of a polyhedron.
        There are six item vectors
        $\bv_1,\bv_2,\bv_3,\bv_4,\bv_5,\bv_6$ on $\R^2$,
        of which convex hull $\calP$ has $\bv_1,\bv_2,\bv_3,\bv_4$ as its vertices;
        each dashed segment represents a facet of $\calP$.
        For each facet,
        there exists a ranking of size $2$ dominated by any two item vectors on the facet.
    }
    \label{fig:theory:rank-d}
\end{figure}
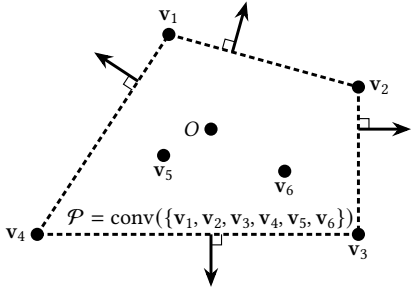

\paragraph{Mechanism behind Popularity Bias (cf.~\cref{subsec:theory:popularity})}
We then study the mechanism behind popularity bias and its effect on the space of representable rankings.
Consider first that there exists a pair of sets consisting of \emph{popular} and \emph{long-tail} items.
Usually, many users prefer popular items than long-tail ones,
which turns out to restrict the space from which user embeddings are chosen;
we can thus easily infer that
some similar-to-popular items rank higher than long-tail items as well.

Slightly formally,
given a pair of sets,
$P = \{\bp_1, \ldots, \bp_{|P|}\}$ and $L = \{\bl_1, \ldots, \bl_{|L|}\}$,
consisting of popular and long-tail items,
we force a query vector $\bq$ (e.g., a user embedding)
to always ensure that items of $P$ rank higher than items of $L$.
Under this setting,
we establish a structural characterization of such ``near-popular'' items that are ranked higher than $L$,
informally stated as:
\begin{theorem}[informal; see \cref{lem:theory:singleton,thm:theory:multi}]
\label{lem:theory:cone-informal}
Suppose that a query vector $\bq$ ranks all items of $P$ higher than all items of $L$.
Then,
if an item vector $\bs$ is included in a particular convex cone
that contains the convex hull of $P$,
then $\bs$ ranks higher than every item of $L$
(i.e., $\bs$ is popular).

Moreover, this cone becomes bigger as more popular or long-tail items are added
(i.e., $|P|$ or $|L|$ is increased).
\end{theorem}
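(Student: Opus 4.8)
The plan is to translate the ranking hypothesis into a statement about convex cones and then invoke conic duality. First I would make the hypothesis on $\bq$ explicit: requiring that every popular item outrank every long-tail item is exactly the system of strict linear inequalities $\langle \bq, \bp_i - \bl_j\rangle > 0$ for all $i,j$, so the admissible query vectors form an open convex cone $Q = \{\bq : \langle \bq, \bp_i - \bl_j\rangle > 0 \text{ for all } i,j\}$. The set I really want to characterize is that of item vectors $\bs$ guaranteed to be popular, i.e.\ that outrank all of $L$ for \emph{every} admissible query; formally, $\langle \bq, \bs - \bl_j\rangle > 0$ must hold for all $\bq \in Q$ and all $j$. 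This reduces the claim to understanding which vectors $y$ satisfy $\langle \bq, y\rangle > 0$ throughout $Q$.

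The key step is conic duality. Since $Q$ is cut out by the half-spaces with normals $\{\bp_i - \bl_j\}$, its closed dual cone is the conic hull $Q^* = \mathrm{cone}(\{\bp_i - \bl_j : i,j\})$, and $\langle \bq, y\rangle \ge 0$ for all $\bq \in \overline{Q}$ iff $y \in Q^*$. Applying this with $y = \bs - \bl_j$ for each $j$ shows that $\bs$ is robustly popular precisely when $\bs - \bl_j \in Q^*$ for every $j$, that is, $\bs \in \bigcap_j (\bl_j + Q^*)$. I would first treat the singleton case $|L|=1$ of \cref{lem:theory:singleton}, where this collapses to the single translated cone $\bl + \mathrm{cone}(\{\bp_i - \bl\})$; this is the ``particular convex cone'' of the statement. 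To see that it contains $\conv(P)$, note that any convex combination $\sum_i \mu_i \bp_i$ with $\sum_i \mu_i = 1$ equals $\bl + \sum_i \mu_i(\bp_i - \bl)$, which lies in the cone since the $\mu_i \ge 0$ are valid conic coefficients. The general case of \cref{thm:theory:multi} then follows by intersecting over the long-tail items.

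For the monotonicity claim I would argue at the level of generators and constraints. Adding a popular item appends a new generator $\bp_{m+1} - \bl_j$ to the dual cone, which can only enlarge $Q^*$ and hence the robustly-popular region. Adding a long-tail item tightens $Q$ by one more strict inequality, and shrinking $Q$ enlarges its dual $Q^*$; this is the direction that drives the ``near-popular'' region outward.

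The main obstacle I anticipate is twofold. Technically, the strict-versus-nonstrict bookkeeping needs care: $Q$ is open and the ranking is strict, whereas the clean duality statement $\langle \bq, y\rangle \ge 0 \Leftrightarrow y \in Q^*$ is about the closed cone, so I must verify that boundary query directions do not break the strict conclusion (handled by a limiting argument or by assuming $Q$ is full-dimensional). Conceptually, the harder part is the long-tail monotonicity: enlarging $L$ both shrinks $Q$ (helping, by enlarging $Q^*$) and imposes a new beat-$\bl_{k+1}$ requirement (hurting, by adding a factor to the intersection $\bigcap_j(\bl_j + Q^*)$, which is in general no longer a single cone), so the net ``bigger cone'' assertion is not a one-line consequence and requires showing that the enlargement of $Q^*$ dominates. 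This is where \cref{thm:theory:multi} does the real work and where I would concentrate most of the effort.
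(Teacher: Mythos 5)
Your conic-duality route for the first half of the statement is correct, genuinely different from the paper's argument, and in one respect stronger. The paper proves the singleton case (\cref{lem:theory:singleton}) by a separating-hyperplane argument: if $\bs$ lies outside $\PC(\{\bp_1-\bl,\ldots,\bp_{|P|}-\bl\};\bl)$, the Minkowski--Weyl theorem yields a half-space containing that cone with $\bl$ on its boundary, and the negated outward normal is an admissible query under which $\bs$ scores strictly below $\bl$. For several long-tail items the paper never characterizes $\calS(P,L)$ at all: \cref{thm:theory:multi} only establishes the inner bounds \cref{eq:theory:multi,eq:theory:multi-conv} by direct computation, and the paper explicitly leaves open whether the first inclusion is an equality. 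Your argument instead applies Farkas duality to the polyhedral cone $\calQ(P,L)$: its dual cone is $\PC(\{\bp_i-\bl_{j}\}_{i,j})$, so $\bs\in\calS(P,L)$ iff $\bs-\bl_j$ lies in that cone for every $j$, i.e. $\calS(P,L)=\bigcap_{j}\bigl(\bl_j\oplus\PC(\{\bp_i-\bl_{j'}\}_{i,j'})\bigr)$. This specializes to the paper's singleton theorem when $|L|=1$, implies both displayed inclusions of \cref{thm:theory:multi} (cones are closed under addition, and $\conv(\{\bp_1,\ldots,\bp_{|P|}\})$ sits inside every translate), and pins down $\calS(P,L)$ exactly---which the paper does not do. One technical adjustment: run the duality on the closed cone $\calQ(P,L)$ directly, where Farkas needs no regularity assumption, rather than on the closure of your open cone $Q$, which coincides with $\calQ(P,L)$ only when the strict system is feasible (your own ``Slater'' caveat).

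The gap is in the monotonicity half, and it is not the technical hurdle you think it is: the statement you plan to prove there is false. The exact robustly-popular region $\calS(P,L)$ is \emph{not} monotone in $L$, so no argument that ``the enlargement of $Q^*$ dominates'' the extra intersection factor can succeed. Concretely, in $\R^2$ take $P=\{\bp_1\}$ with $\bp_1=(1,0)$, $\bl_1=(0,0)$, $\bl_2=(0,1)$, and $\bs=(\tfrac{1}{2},0)$. Then $\bs\in\calS(P,\{\bl_1\})=\bl_1\oplus\PC(\bp_1-\bl_1)$, but the query $\bq=(1,\tfrac{3}{4})$ strictly ranks $\bp_1$ above both $\bl_1$ and $\bl_2$ while $\langle\bq,\bs\rangle=\tfrac{1}{2}<\tfrac{3}{4}=\langle\bq,\bl_2\rangle$; hence $\bs\notin\calS(P,\{\bl_1,\bl_2\})$. (Your own formula confirms this: $\bs-\bl_2=(\tfrac{1}{2},-1)\notin\PC(\{(1,0),(1,-1)\})$.) The informal theorem's ``this cone becomes bigger'' does not refer to $\calS(P,L)$ itself but to the \emph{inner approximation} $\conv(\{\bp_1,\ldots,\bp_{|P|}\})\oplus\bigoplus_{i,j}\overrightarrow{\bp_i-\bl_j}$ of \cref{eq:theory:multi-conv}, which is a subset of $\calS(P,L)$, contains $\conv(P)$, and is monotone in both $P$ and $L$ for free, since adding items only adds Minkowski summands. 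So the fix is not to fight the intersection in your exact characterization but to retreat to this monotone inner bound, exactly as the paper does.
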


\begin{example}
\cref{fig:theory:multi} shows an illustration of \cref{lem:theory:cone-informal}.
We are given three popular items $P = \{\bp_1,\bp_2,\bp_3\}$ and
two long-tail items $\bl_1,\bl_2$.
Given that a query vector $\bq$ ranks
the three popular items higher than $\bl_1$ \emph{only},
$\bq$ ranks another item $\bs$ higher than $\bl_1$ whenever
$\bs$ is in $\calS(P,\{\bl_1\})$,
which is a convex cone denoted by northeast blue lines.
Similarly, if $\bq$ ranks $P$ higher than $\bl_2$,
$\bs$ in $\calS(P,\{\bl_2\})$, denoted by northwest red lines,
ranks higher than $\bl_2$.

Consider now the case that
popular items of $P$ rank higher than \emph{both} $\bl_1$ and $\bl_2$.
Then, an item $\bs$ ranks higher than $\bl_1$ and $\bl_2$ if
$\bs$ is included in $\calS(P,\{\bl_1,\bl_2\})$,
which is a convex cone denoted by two arrows.
This convex cone is larger than $\calS(P,\{\bl_1\})$ and $\calS(P,\{\bl_2\})$.
\end{example}

The above theorem suggests that
the existence of a small number of popular and long-tail items
makes other near-popular items superior to long-tail ones,
reducing the number of representable rankings, and thus
we may not completely avoid popularity bias.

In conclusion, our theoretical results justify the empirical observations:
The limited catalog coverage with low dimensionality in \cref{sec:diversity-and-fairness}
is due to lack of the number of representable rankings as in \cref{thm:theory:rank-informal},
and 
the popularity bias (a large value of ARP@$K$) observed in \cref{sec:personalization}
is partially explained by \cref{lem:theory:cone-informal} and the discussion in \cref{subsec:theory:popularity}.
Furthermore,
as our theoretical methodology only assumes that
the underlying architecture follows dot-product models,
the counter-intuitive phenomena of low-dimensionality
would be applied to not only for iALS but also for any dot-product models.

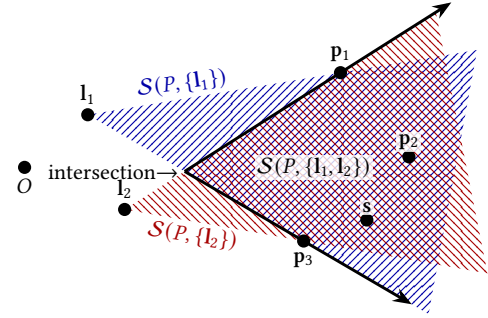
\begin{figure}[t]
    \centering
    \scalebox{0.7}{\begin{tikzpicture}
    \tikzset{node/.style={circle, fill=black, font=\LARGE, text centered, inner sep=0, outer sep=0, minimum size=0.25cm}};
    \tikzset{dotted-edge/.style={ultra thick, densely dashed}};
    \tikzset{bold-edge/.style={very thick}};
    \tikzset{arrow-edge/.style={ultra thick, ->, >=Stealth}};


    \coordinate(u1) at (1.2,1.0);
    \coordinate(u2) at (1.9,-0.8);
    \coordinate(p1) at (6.0,1.8);
    \coordinate(p2) at (7.3,0.2);
    \coordinate(p3) at (5.3,-1.4);
    \coordinate(ss) at (6.5,-1.0);
    \coordinate(O) at (0,0);

    \path[pattern=north east lines, pattern color=blue!65!black] (u1)--($(u1)!7.5cm!(p1)$)--($(u1)!7.5cm!(p3)$)--cycle;
    \path[pattern=north west lines, pattern color=red!65!black] (u2)--($(u2)!7cm!(p1)$)--($(u2)!7cm!(p3)$)--cycle;

    \draw(u1) node[node]{};
    \draw(u2) node[node]{};
    \draw(p1) node[node]{};
    \draw(p2) node[node]{};
    \draw(p3) node[node]{};
    \draw(ss) node[node]{};
    \draw(O) node[node]{};
    \node[above=0.1cm of u1] {\LARGE $\bl_1$};
    \node[above=0.1cm of u2] {\LARGE $\bl_2$};
    \node[above=0.1cm of p1] {\LARGE $\bp_1$};
    \node[above=0.1cm of p2, inner sep=0.04cm, fill=white, opacity=0.8, text opacity=1] {\LARGE $\bp_2$};
    \node[below=0.1cm of p3] {\LARGE $\bp_3$};
    \node[above=0.1cm of ss, inner sep=0.04cm, fill=white, opacity=0.8, text opacity=1] {\LARGE $\bs$};
    \node[below=0.1cm of O] {\LARGE $O$};
    \node[text centered, rotate=8] at (3,1.6) {\textcolor{blue!65!black}{\LARGE $\calS(P, \{\bl_1\})$}};
    \node[text centered, rotate=-8] at (3.2,-1.3) {\textcolor{red!65!black}{\LARGE $\calS(P, \{\bl_2\})$}};
    \node[text centered, inner sep=0.04cm, fill=white, opacity=0.8, text opacity=1] at (5.5,-0.0) {\LARGE $\calS(P, \{\bl_1,\bl_2\})$};
    
    \path[name path=u1p3] (u1)--(p3);
    \path[name path=u2p1] (u2)--(p1);
    \path[name intersections={of=u1p3 and u2p1, by={c}}];
    \draw[arrow-edge] (c)--($(c)!6cm!(p1)$);
    \draw[arrow-edge] (c)--($(c)!5cm!(p3)$);
    \node[text centered, left=0cm of c] {\LARGE intersection$\rightarrow$};

\end{tikzpicture}}
    \caption{
        Illustration of \cref{lem:theory:cone-informal}.
        Let $P = \{\bp_1,\bp_2,\bp_3\}$ be a set of three popular items and
        $L = \{\bl_1,\bl_2\}$ be a set of two long-tail items.
        Any item in $\calS(P,\{\bl_1\})$ (resp.~$\calS(P,\{\bl_2\})$)
        denoted northeast blue (resp.~northwest red) lines,
        ranks higher than $\bl_1$ (resp.~$\bl_2$) whenever all popular items of $P$ rank higher than $\bl_1$ (resp.~$\bl_2$).
        The intersection of $\calS(P,\{\bl_1\})$ and $\calS(P,\{\bl_2\})$ is crosshatched.
        $\calS(P,\{\bl_1,\bl_2\})$ is a convex cone defined by two arrows, which includes $\calS(P,\{\bl_1\}) \cap \calS(P,\{\bl_2\})$.
        Any item in $\calS(P,\{\bl_1,\bl_2\})$ ranks higher than both $\bl_1$ and $\bl_2$
        whenever all items of $P$ rank higher than all items of $L$.
    }
    \label{fig:theory:multi}
\end{figure}

\section{Discussion and Future Direction}
\label{sec:discussion}
\paragraph{Efficient Solvers for High Dimensionality}
High-dimensional models are often computationally costly.
Even in the most efficient methods based on MF,
the optimization involves cubic dependency on $d$ and thus does not scale well for high-dimensional models.
Motivated by this, previous studies have explored efficient methods for high-dimensional models~\cite{he2016fast,bayer2017generic,rendle2021ials++}.
Because the traditional ALS solver for MF is in the class of block coordinate descent (CD), conventional methods can be derived by designing the choice of blocks~\cite{rendle2021ials++}.
It is hence interesting to design block CD methods for high-dimensional models considering efficiency on modern computer architectures (e.g., CPU with SIMD, GPU, and TPU~\cite{mehta2021alx}).
Developing solvers for more complex models, such as factorization machines~\cite{rendle2012factorization,bayer2017generic}, is useful to leverage side information.
Since deep learning-based models are also employed, 
efficient solvers for such non-linear models~\cite{yang2020mixed} will be beneficial for enhancing the ranking quality.
The extension of conventional algorithms to stochastic optimization, which uses a subset of data samples in a single update, is also important for using memory-intensive complex models and large-scale data.
To reduce the memory footprint of embeddings, sparse representation may also be promising~\cite{ning2011slim}.
As sparsity constraints introduce another difficulty in optimization, 
some techniques, such as the alternating direction method of multipliers (ADMM)~\cite{boyd2011distributed}, would be required.
ADMM is a recent prevalent approach to enable parallel and scalable optimization under constraints~\cite{yu2014distributed,steck2020admm,steck2021negative,togashi2022fair}.
Improving the serving cost of high-dimensional models is also essential by using maximum inner-product search (MIPS)~\cite{jegou2010product,shrivastava2014asymmetric,malkov2018efficient}.

\paragraph{Efficient Methods for Diversity and Item Fairness}
Our empirical results suggest that diversity and item fairness may also be beneficial for efficient data collection and long-term accuracy.
We can also infer from our results that directly optimizing diversity and item fairness is a promising approach to enhancing long-term recommendation quality.
From a practical point of view, however, diverse and fair item recommendation is often computationally costly because of combinatorial explosions in large-scale settings.
Hence, it is an important direction to develop efficient diversity-aware recommendation~\cite{yao2016tweet,chen2017improving,chen2018fast,wilhelm2018practical,warlop2019tensorized}.
In the same spirit as dot-product models, 
efficient sampling techniques based on MIPS are essential for real-time retrieval~\cite{han2020map,hirata2022solving}.
On the other hand, fairness-aware item recommendation is a challenging task in terms of its computational efficiency.
Because fairness requires controlling item exposure while considering all rankings for users, 
it is relatively complex in both optimization and prediction phases compared to fairness-agnostic top-$K$ ranking problems.
There exist a variety of approaches to implementing fair recommender systems based on constraint optimization~\cite{celis2018ranking,singh2018fairness,patro2020fairrec,memarrast2021fairness,togashi2022fair}.
Although we focus on offline collaborative filtering,
online recommendation methods based on bandit algorithms~\cite{li2016collaborative,korda2016distributed,mahadik2020fast,patil2021achieving,wang2021fairness,jeunen2021top} are also promising for directly integrating accuracy optimization and data collection.

\paragraph{Further Theoretical Analysis on Dot-product Models}
Our theoretical results might help us gain a deeper understanding of the expressive power of dot-product models.
Besides the open problems described in \cref{sec:theory}, there is still much room for further investigation of its expressive power.
We may consider a fine-grained analysis of representable $K$-permutations for $K \not\in \{d,n\}$; e.g.,
to establish the exact upper or lower bounds of $\nrank_K$.
One major limitation of \cref{thm:theory:rank-informal} is that they do not promise that 
\emph{every} $n^{\Theta(d)}$ ranking is representable (over some $n$ $d$-dimensional item vectors); i.e.,
we cannot rule out the existence of a small set of rankings that cannot be expressed under dot-product models.
Thus, a possible direction is to analyze the representability of an input set $\Pi$ of rankings:
Can we construct $n$ item vectors over which 
each ranking of $\Pi$ is representable?
This question may be thought of as an \emph{inverse problem} to that discussed in \cref{subsec:theory:bound}.

\section{Conclusion}
In this paper, we presented empirical results that reveal the necessity of sufficient dimensionality in dot-product models.
Our results suggest that low-dimensionality leads to overfitting to existing popularity bias and further amplifies the bias in future data.
This phenomenon, referred to as \emph{curse of low dimensionality}, partly causes modern problems in recommender systems, such as personalization, diversity, fairness, and robustness to biased feedback.
Then, we theoretically discussed the phenomenon from the viewpoint of \emph{representable rankings}, which are the rankings that a model can represent.
We showed the bound on the number of representable rankings for $d$-dimensional models.
This result suggests that low-dimensionality leads to an exponentially small number of representable rankings.
We also explained the effect of popular items on representable rankings.
Finally, we established a structural characterization of near-popular items, suggesting a mechanism behind popularity bias under dot-product models.

\appendix

\section{Formal Statements and Proofs}
\label{app:theory}

\subsection{Number of Representable Rankings}
\label{subsec:theory:bound}
We devise lower and upper bounds on the number of representable rankings parameterized by
the number of items,
dimensionality, and 
size of rankings.
Hereafter, we identify the set $\calV$ of $n$ items with $[n] \triangleq \{1,2,\ldots, n\}$.
For nonnegative integers $n \in \mathbb{N}$ and $K \in [n]$,
let $\mathfrak{S}_n$ denote the set of all permutations over $[n]$, and
let $\mathfrak{S}_n^K$ denote the set of all $K$-permutations of $[n]$
(also known as partial permutations); e.g.,
$\mathfrak{S}_3^2 = \{(1,2),\allowbreak(1,3),\allowbreak(2,1),\allowbreak(2,3),\allowbreak(3,1),\allowbreak(3,2)\}$.
Note that
$\mathfrak{S}_n^n = \mathfrak{S}_n$,
$|\mathfrak{S}_n| = n!$, and
$|\mathfrak{S}_n^K| = n^{\underline{K}}$,
where $n^{\underline{K}} \triangleq \frac{n!}{(n-K)!}$ denotes the falling factorial.
We consider each $K$-permutation $\pi \in \mathfrak{S}_n^K$ as
a ranked list of $K$ items
such that item $\pi(i)$ ranks in the $i$-th place.

We now define the representability of $K$-permutations.
For $n$ items,
let $\bv_1, \ldots, \bv_n$ be vectors in $\R^d$ representing their embeddings.
Without much loss of generality, we assume that 
they are in \emph{general position}; i.e., no $d+1$ vectors lie on a hyperplane.
Given a \emph{query vector} $\bq \in \R^d$ (e.g., a user embedding),
we generally produce a ranking obtained by
arranging $n$ items in descending order of $\langle \bq, \bv_i \rangle$.
We thus say that
$\bq$ over $\bv_1, \ldots, \bv_n$ \emph{represents}
a $K$-permutation $\pi \in \mathfrak{S}_n^K$ if
$\langle \bq, \bv_{\pi(i)} \rangle > \langle \bq, \bv_{\pi(j)} \rangle$ for all $1 \leq i < j \leq K$
and $\langle \bq, \bv_{\pi(K)} \rangle > \langle \bq, \bv_{\pi(i)} \rangle$ for all $K+1 \leq i \leq n$,
and that
$\pi$ is \emph{representable} if such $\bq$ exists.
Here, we emphasize that ``ties'' are not allowed;
i.e., $\bq$ does not represent $\pi$ whenever
$\langle \bq, \bv_i \rangle = \langle \bq, \bv_j \rangle$ for some $i \neq j$.
We let $\nrank_K(\bv_1, \ldots, \bv_n)$ be
the number of representable $K$-partial permutations over $\bv_1, \ldots, \bv_n$;
namely,
\begin{align*}
    \nrank_K(\bv_1, \ldots, \bv_n) \triangleq
    |\{
        \pi \in \mathfrak{S}_n^K :
        \pi \text{ is representable over } \bv_i\text{'s}
    \}|.
\end{align*}
By definition,
$\nrank_K(\bv_1, \ldots, \bv_n) \leq n^{\underline{K}} \leq n^K$.
Our first result is an upper bound on $\nrank_n$.

\begin{theorem}
\label{thm:theory:rank-n}
For any $n$ vectors $\bv_1, \ldots, \bv_n$ in $\R^d$ in general position,
$
    \nrank_n(\bv_1, \ldots, \bv_n) \leq n^{2d}.
$
In particular,
$\nrank_n(\bv_1, \ldots, \bv_n) \leq n^{\min\{K, 2d\}}$
for every $K \in [n]$.
\end{theorem}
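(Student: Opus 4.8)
The plan is to reduce the counting of representable rankings to counting the regions (chambers) of a central hyperplane arrangement, exactly as suggested by \cref{fig:theory:rank-n}. The key observation is that the full ranking produced by a query vector $\bq$ is completely determined by the $\binom{n}{2}$ signs $\sgn\langle \bq, \bv_i - \bv_j \rangle$ for $i<j$: item $i$ precedes item $j$ iff $\langle \bq, \bv_i - \bv_j \rangle > 0$. So I would introduce, for each unordered pair $\{i,j\}$, the difference vector $\bw_{ij} = \bv_i - \bv_j$ and the hyperplane $H_{ij} = \{\bq \in \R^d : \langle \bq, \bw_{ij} \rangle = 0\}$ through the origin, and consider the central arrangement $\calA = \{H_{ij} : i<j\}$ of at most $m = \binom{n}{2}$ hyperplanes.

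First I would show that the map sending a representable full permutation to the chamber of $\calA$ containing any of its witnesses is well defined and injective. A witness $\bq$ of a representable ranking incurs no ties, hence $\langle \bq, \bw_{ij} \rangle \neq 0$ for every pair, so $\bq$ lies in the open complement $\R^d \setminus \bigcup_{i<j} H_{ij}$, i.e., inside some chamber; since each sign $\sgn\langle \bq, \bw_{ij} \rangle$ is constant on a chamber, the induced ranking is constant on it, and two query vectors in the same chamber represent the same permutation. Consequently $\nrank_n(\bv_1,\ldots,\bv_n)$ is at most the number of chambers of $\calA$. This step needs no genericity of the $\bv_i$: coincident or parallel difference vectors only merge hyperplanes and shrink the chamber count, so bounding by $m = \binom{n}{2}$ (possibly non-distinct) hyperplanes remains safe.

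Next I would invoke the classical bound on the number of regions of an arrangement of $m$ hyperplanes in $\R^d$, namely $\sum_{k=0}^{d} \binom{m}{k}$ (the affine bound dominates the central case and is maximized in general position). Substituting $m = \binom{n}{2} \leq n^2/2$ and using the crude estimates $\binom{m}{k} \leq m^k$ together with $d+1 \leq 2^d$, a short calculation yields $\sum_{k=0}^{d}\binom{m}{k} \leq (d+1)(n^2/2)^d \leq n^{2d}$, establishing the first claim. For the ``in particular'' refinement I would note that every representable $K$-permutation is the length-$K$ prefix of some representable full permutation: a witness for a $K$-permutation can have ties only among the bottom $n-K$ items, and a generic arbitrarily small perturbation breaks those ties while preserving the (open, strict) top-$K$ conditions. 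Hence $\nrank_K \leq \nrank_n \leq n^{2d}$, and combining with the trivial count $\nrank_K \leq n^{\underline{K}} \leq n^K$ gives $\nrank_K \leq n^{\min\{K, 2d\}}$.

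The conceptual crux---and the only genuinely nontrivial move---is the reduction in the first two steps: recognizing that a ranking is a sign pattern of the difference vectors and that representability is equivalent to membership in a chamber of the induced central arrangement. Once that correspondence is in place, the remainder is a routine application of the region-counting formula and elementary binomial estimates. The one point demanding care is the bookkeeping of strict inequalities (``no ties''), which guarantees that witnesses land in open chambers rather than on lower-dimensional faces and that the bottom-of-ranking perturbation argument goes through cleanly; this is where the general-position assumption, though not strictly needed for the upper bound, keeps the correspondence transparent.
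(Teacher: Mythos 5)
Your proposal is correct and takes essentially the same route as the paper's proof: both identify a representable permutation with the sign pattern of $\langle \bq, \bv_i - \bv_j \rangle$ over all pairs, i.e., with a region of the central arrangement of hyperplanes orthogonal to the difference vectors, and then bound the number of such regions. The only (immaterial) differences are that the paper invokes the sharper count $2\sum_{0 \leq i \leq d-1}\binom{\binom{n}{2}-1}{i}$ for hyperplanes through a common point due to Winder, while you use the generic affine bound $\sum_{k=0}^{d}\binom{m}{k}$, and that you make explicit the prefix-plus-perturbation argument for the $K$-permutation case, which the paper states without proof.
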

The proof uses a characterization of representable permutations by hyperplane arrangements,
which is illustrated in \cref{fig:theory:rank-n}.

Subsequently, we provide a lower bound on $\nrank_d$ in terms of the number of facets of a polyhedron.
Let $\calP \triangleq \conv(\{\bv_1, \ldots, \bv_n\}) \subset \R^d$
be a \emph{convex hull} of $\bv_1, \ldots, \bv_n$;
every vertex of $\calP$ corresponds to some $\bv_i$.
For a vector $\mathbf{a} \in \R^d$ and scalar $z \in \R$,
a linear inequality $\langle \mathbf{a}, \mathbf{x} \rangle \leq b$ is said to be \emph{valid}
if $\langle \mathbf{a}, \mathbf{x} \rangle \leq b$ for all $\mathbf{x} \in \calP$.
A subset $\calF$ of $\calP$ is called a \emph{face} of $\calP$ if
$
    \calF = \calP \cup \{ \mathbf{x} : \langle \mathbf{a},\mathbf{x} \rangle = b \}
$
for some valid linear inequality $\langle \mathbf{a},\mathbf{x} \rangle \leq b$.
In particular, $(d-1)$-dimensional faces are called \emph{facets}.
Every facet includes exactly $d$ vertices of $\calP$ by definition (whenever $n$ vectors are in general position).
Our second result is as follows.

\begin{theorem}
\label{thm:theory:rank-d}
For any $n$ vectors $\bv_1, \ldots, \bv_n$ in $\R^d$ in general position,
$\nrank_d(\bv_1, \ldots, \bv_n)$ is at least the number of facets of $\calP$.
Moreover, there exist $n$ vectors $\bv_1, \ldots, \bv_n$ in $\R^d$ such that
\begin{align*}
    \nrank_d(\bv_1, \ldots, \bv_n)
    \geq {n \choose \left\lceil d/2 \right\rceil}
    \geq \left(\frac{n}{d/2}\right)^{d/2}.
\end{align*}
\end{theorem}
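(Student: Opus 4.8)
The plan is to manufacture each representable $d$-permutation from a face of the convex hull $\calP$, using that face's supporting hyperplane as the query direction and then perturbing it to break the resulting ties. Let $\calF$ be a face of $\calP$ whose vertex set is $\{\bv_i : i \in I\}$ with $|I| = k \le d$, and let $\langle \mathbf{a}, \x \rangle \le b$ be a valid inequality with $\langle \mathbf{a}, \bv_i \rangle = b$ for $i \in I$ and $\langle \mathbf{a}, \bv_j \rangle < b$ for $j \notin I$. Setting $\bq = \mathbf{a}$ ranks every item of $\calF$ strictly above all others but leaves the items of $\calF$ tied; I would then replace $\bq$ by $\mathbf{a} + \epsilon \bw$ for a generic $\bw$ and a sufficiently small $\epsilon > 0$, so that the values $\langle \bq, \bv_i \rangle$ for $i \in I$ become pairwise distinct while still exceeding $\langle \bq, \bv_j \rangle$ for every $j \notin I$. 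Such a $\bq$ represents a $d$-permutation whose first $k$ entries are exactly the vertices of $\calF$ and whose remaining $d-k$ entries are the next-largest items under $\mathbf{a}$. Consequently, faces with distinct vertex sets of a common size give rise to representable permutations with distinct top-$k$ prefixes, hence to distinct permutations.

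The first assertion follows by applying this mechanism to the facets of $\calP$: in general position each facet has exactly $d$ vertices, so the above produces, for each facet, a representable $d$-permutation whose top-$d$ set is that facet's vertex set. Distinct facets have distinct vertex sets, so these permutations are distinct, and $\nrank_d(\bv_1, \ldots, \bv_n)$ is at least the number of facets of $\calP$.

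For the existence statement I would place the $\bv_i$ on the moment curve, $\bv_i = (t_i, t_i^2, \ldots, t_i^d)$ with $t_1 < \cdots < t_n$, so that $\calP$ is the cyclic polytope $C(n,d)$. Its crucial property is that it is $\lfloor d/2 \rfloor$-neighborly: every set of at most $\lfloor d/2 \rfloor$ vertices is a face. Writing $m = \lfloor d/2 \rfloor$ (which coincides with $\lceil d/2 \rceil$ when $d$ is even), each of the $\binom{n}{m}$ subsets of size $m$ is thus a face, and the construction above turns it into a representable $d$-permutation with that subset as its top-$m$ prefix; distinct subsets yield distinct permutations, so $\nrank_d \ge \binom{n}{m}$. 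The elementary bound $\binom{n}{m} \ge (n/m)^m$ then delivers the stated $(n/(d/2))^{d/2}$ estimate.

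I expect the perturbation step to be the main obstacle. I must exhibit a direction $\bw$ and threshold $\epsilon$ such that the query $\bq = \mathbf{a} + \epsilon\bw$ (i) separates the $k$ previously tied values, (ii) keeps every item of $\calF$ above every item outside it, and (iii) induces an unambiguous order on the remaining items so that positions $k+1, \ldots, d$ are well defined. Requirements (i) and (iii) hold whenever $\bw$ avoids the finitely many hyperplanes $\langle \bw, \bv_i - \bv_{i'} \rangle = 0$, i.e.\ for generic $\bw$, while (ii) holds once $\epsilon$ is small enough that the $O(\epsilon)$ perturbation cannot overcome the positive gap $\min_{j \notin I}(b - \langle \mathbf{a}, \bv_j \rangle)$; making this uniform choice of $\epsilon$ precise is the part that needs the most care. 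A minor subtlety is that neighborliness yields the count with exponent $\lfloor d/2 \rfloor$, so the exponent $\lceil d/2 \rceil$ in the statement is attained exactly in the even-dimensional case.
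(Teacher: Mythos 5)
Your proposal takes essentially the same route as the paper---query vectors obtained from supporting hyperplanes of faces of $\calP$, with the cyclic polytope as the extremal construction---but your execution differs in two places, and both differences work in your favor. First, the paper's proof of its key step (\cref{clm:theory:facet}) simply sets $\bq = \mathbf{a}$, the facet normal; under the paper's definition of representability, which explicitly forbids ties, that $\bq$ represents no $d$-permutation at all, since every vertex of the facet receives the identical score $b$. Your perturbation $\bq = \mathbf{a}+\epsilon\bw$ with generic $\bw$ and sufficiently small $\epsilon$ is precisely the missing repair, and your outline of why requirements (i)--(iii) hold is sound. Second, for the extremal bound the paper counts facets and quotes McMullen's upper bound theorem as giving $\binom{n}{\lceil d/2\rceil}$ facets for the cyclic polytope, whereas you invoke $\lfloor d/2\rfloor$-neighborliness (every $\lfloor d/2\rfloor$-subset of vertices is a face) together with your extension of the face-to-permutation mechanism to faces with $k<d$ vertices, distinguishing the resulting permutations by their top-$m$ prefix sets. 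The parity gap you flag at the end is real, but it originates in the paper rather than in your argument: for odd $d$ the maximum number of facets of a $d$-polytope on $n$ vertices is $\Theta(n^{\lfloor d/2\rfloor})$ (for $d=3$ it is $2n-4$, far below $\binom{n}{2}$), so the paper's claimed $\binom{n}{\lceil d/2\rceil}$ cannot be reached by facet counting either; your $\binom{n}{\lfloor d/2\rfloor}$ is what this method honestly yields, it coincides with the stated bound exactly when $d$ is even, and it still delivers the $n^{\Theta(d)}$ conclusion in every dimension.
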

By \cref{thm:theory:rank-n,thm:theory:rank-d},
the maximum possible number of representable $K$-permutations over $n$ vectors in $\R^d$ is
$n^{\Theta(\min\{K,d\})}$ for all $K \in [n]$ and $d = \calO(1)$.

What remains to be done is the proof of \cref{thm:theory:rank-n,thm:theory:rank-d}.

\begin{proof}[Proof of \cref{thm:theory:rank-n}]
For each $1 \leq i < j \leq n$,
we introduce a \emph{pairwise preference} $\delta_{i,j} \in \{\pm 1\}$ between $i$ and $j$.
We wish for a query vector $\bq \in \R^d$ to ensure that
item $i$ (resp.~$j$) ranks higher than item $j$ (resp.~$i$) if $\delta_{i,j}$ is $+1$ (resp.~$-1$).
This requirement is equivalent to
$\langle \bq, \bv_i-\bv_j \rangle \cdot \delta_{i,j} > 0$.
Thus,
if the following system of linear inequalities is feasible,
any of its solutions $\bq$ represents a unique permutation consistent with $\delta_{i,j}$'s:\footnote{
    Note that for \cref{eq:theory:lineq} to be feasible,
    $\delta_{i,j}$'s should be \emph{transitive}; i.e.,
    $\delta_{i,j}=+1$ and $\delta_{j,k}=+1$ imply $\delta_{i,k}=+1$.
}
\begin{align}
\label{eq:theory:lineq}
    \langle \bq, \bv_i-\bv_j \rangle \cdot \delta_{i,j} > 0 \quad \text{ for all } 1 \leq i < j \leq n.
\end{align}
On one hand, for any representable permutation $\pi \in \mathfrak{S}_n$,
\cref{eq:theory:lineq} defined by
\begin{align*}
    \delta_{i,j} = 
    \begin{cases}
    +1 & \text{if } i \text{ ranks higher than } j, \\
    -1 & \text{if } j \text{ ranks higher than } i, \\
    \end{cases}
\end{align*}
must be feasible.
On the other hand,
two different sets of pairwise preferences derive distinct permutations (if they exist).
Hence, $\nrank_n(\bv_1, \ldots, \bv_n)$ is equal to 
the number of $\delta_{i,j}$'s for which \cref{eq:theory:lineq} has a solution.

Observe now that \cref{eq:theory:lineq} is feasible
\emph{if and only if}
the intersection of $\calH_{i,j}$'s for all $1 \leq i < j \leq n$ is nonempty,
where $\calH_{i,j}$ is an open half-space defined as
\begin{align*}
    \calH_{i,j}
    \triangleq \{ \bq \in \R^d : \langle \bq, \bv_i-\bv_j \rangle \cdot \delta_{i,j} > 0 \}.
\end{align*}
Because $\calH_{i,j}$ is obtained via the division of $\R^d$ by
a unique hyperplane that is orthogonal to $\bv_i - \bv_j$ and intersects the origin,
the number of $\delta_{i,j}$'s for which \cref{eq:theory:lineq} is feasible
is equal to 
\emph{the number of regions generated by hyperplane arrangement},
which has been investigated in geometric combinatorics.
By \cite{ho2006number,winder1966partitions},
the number of regions generated by
${n \choose 2}$ $(d-1)$-dimensional hyperplanes that have a common point
is at most
\begin{align}
\label{eq:theory:regions}
    2\sum_{0 \leq i \leq d-1} {{n \choose 2}-1 \choose i}
    \leq 2 \cdot {n \choose 2}^{d-1}
    \leq n^{2d},
\end{align}
thereby completing the proof.
\end{proof}

\begin{remark}
\cref{fig:theory:rank-n} illustrates
the equivalence between representable permutations and regions of hyperplane arrangements.
There are four vectors $\bv_1, \bv_2, \bv_3, \bv_4$ on $\R^2$.
Each dashed line connects a pair of the four vectors;
each bold line is orthogonal to some dashed line and intersects the origin.
These hyperplanes generate twenty regions, each of which expresses a distinct permutation.
The number ``$12$'' is tight because
the left-hand side of \cref{eq:theory:regions} is
$2 \left({{4 \choose 2}-1 \choose 0} + {{4 \choose 2}-1 \choose 1}\right) = 12$.
On the other hand, a ranking $(\bv_1, \bv_3, \bv_2, \bv_4)$
is not representable.
\end{remark}

Before going into the proof of \cref{thm:theory:rank-d},
we prove the following,
which partially characterizes representable $d$-permutations and
is illustrated in \cref{fig:theory:rank-d}:
\begin{claim}
\label{clm:theory:facet}
For any set $I \in {[n] \choose d}$ of $d$ items,
if there exists a facet $\calF$ of $\calP$ including $\bv_i$ for every $i \in I$,
there is a query vector $\bq \in \R^d$ that represents a $d$-permutation consisting only of $I$.
\end{claim}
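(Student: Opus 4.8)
The plan is to use the supporting hyperplane of the facet $\calF$ as a ``base'' query vector and then perturb it slightly to break ties. Since $\calF$ is a facet of $\calP$, there is a valid inequality $\langle \mathbf{a}, \mathbf{x} \rangle \leq b$ whose equality set cuts out $\calF$. By general position, a facet contains exactly $d$ vertices, so precisely the $d$ vectors $\{\bv_i : i \in I\}$ satisfy $\langle \mathbf{a}, \bv_i \rangle = b$, while every remaining vertex lies strictly inside, i.e., $\langle \mathbf{a}, \bv_j \rangle < b$ for all $j \notin I$. Thus the choice $\bq = \mathbf{a}$ already ranks every item of $I$ strictly above every item outside $I$; its only defect is that it produces $d$-fold ties among the items of $I$ (all with score $b$).

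Next I would break these ties via a small perturbation $\bq = \mathbf{a} + \epsilon \bu$. I would pick the direction $\bu$ generically so that the map $k \mapsto \langle \bu, \bv_k \rangle$ is injective on $[n]$; this is possible because the $\bv_k$ are distinct, so the set of ``bad'' directions on which some $\langle \bu, \bv_i - \bv_j \rangle$ vanishes is a finite union of hyperplanes and hence has measure zero. For such $\bu$, the scores $\langle \bq, \bv_i \rangle = b + \epsilon \langle \bu, \bv_i \rangle$ with $i \in I$ become pairwise distinct for every $\epsilon \neq 0$, and in fact all $n$ scores become pairwise distinct.

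It then remains to fix $\epsilon$ small enough that the strict separation between $I$ and its complement survives the perturbation. Setting $\Delta \triangleq b - \max_{j \notin I} \langle \mathbf{a}, \bv_j \rangle > 0$ and $M \triangleq \max_k |\langle \bu, \bv_k \rangle|$, any $\epsilon \in (0, \Delta/(2M))$ keeps each score of $I$ above $b - \Delta/2$ and each score outside $I$ below $b - \Delta/2$. Consequently $\bq = \mathbf{a} + \epsilon \bu$ ranks the $d$ items of $I$ (ordered by $\langle \bu, \bv_i \rangle$) strictly above all remaining items, with no ties anywhere, so it represents a $d$-permutation whose top $d$ entries are exactly $I$, as required.

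I expect the only delicate point to be the simultaneous bookkeeping in the last step: one must check that a single $\epsilon$ can both preserve the order between $I$ and its complement and avoid reintroducing ties among the items outside $I$, whose zeroth-order scores $\langle \mathbf{a}, \bv_j \rangle$ may coincide. These are finitely many strict inequalities, each either holding for all small $\epsilon > 0$ or forced by the genericity of $\bu$, so a sufficiently small $\epsilon$ satisfies them all; this verification is routine but is where the care lies.
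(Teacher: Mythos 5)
Your proposal takes the same approach as the paper: use the valid inequality $\langle \mathbf{a}, \mathbf{x} \rangle \leq b$ that cuts out the facet $\calF$, so that $\bq = \mathbf{a}$ scores every item of $I$ at exactly $b$ and every other item strictly below $b$. In fact you are more careful than the paper, whose entire proof is ``letting $\bq \triangleq \mathbf{a}$ completes the proof'': under the paper's own no-ties definition of representability, $\bq = \mathbf{a}$ leaves a $d$-fold tie among the items of $I$ and so does not literally represent any $d$-permutation, and your perturbation $\bq = \mathbf{a} + \epsilon \bu$ with a generic direction $\bu$ and sufficiently small $\epsilon > 0$ is precisely the tie-breaking step needed to make the argument rigorous.
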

\begin{proof}[Proof of \cref{clm:theory:facet}]
By the definition of facets,
there exist $\mathbf{a} \in \R^d$ and $b \in \R$ such that
$\langle \bv_i, \mathbf{a} \rangle = b$ for all $i \in I$ while 
$\langle \bv_i, \mathbf{a} \rangle < b$ for all $i \not\in I$.
Thus, letting $\bq \triangleq \mathbf{a}$ completes the proof.
\end{proof}

\begin{remark}
\cref{fig:theory:rank-d} gives an illustration of \cref{clm:theory:facet}.
There are six vectors
$\bv_1,\bv_2,\bv_3,\bv_4,\bv_5,\bv_6$ on $\R^2$.
The convex hull $\calP$ of them has $\bv_1,\bv_2,\bv_3,\bv_4$ as its vertices;
each dashed segment represents a facet of $\calP$.
Taking an outward normal vector to a facet,
denoted a bold arrow, as a query vector,
we obtain a $2$-permutation dominated by two vectors on the facet.
\end{remark}

\begin{proof}[Proof of \cref{thm:theory:rank-d}]
Because of \cref{clm:theory:facet},
the number of representable $d$-permutation is at least the number of facets of $\calP$.
By the \emph{upper bound theorem} due to {McMullen}~\cite{motzkin1957comonotone,mcmullen1970maximum},
the maximum possible number of facets of
a convex polytope consisting of $n$ vertices in $\R^d$ is 
$
    {n \choose \left\lceil d/2 \right\rceil}.
$
This upper bound can be achieved when $\calP$ is a cyclic polytope,\footnote{
The \emph{cyclic polytope} is defined as
$\conv(\{ \mathbf{\alpha}_d(t_1), \ldots, \mathbf{\alpha}_d(t_n) \}) $ for distinct $t_i$'s,
where $\mathbf{\alpha}_d(t) \triangleq (t, t^2, \ldots, t^d)$ is the moment curve.
}
thus completing the proof.
\end{proof}

\begin{example}
\label{ex:theory:facet}
Finally, we show by a simple example that
facets do not fully characterize representable $d$-permutations; i.e.,
\cref{clm:theory:facet} is not tight.
Four item vectors $\bv_1, \bv_2, \bv_3, \bv_4$ in $\R^2$ are defined as:
\begin{align*}
    \bv_1 = (+1, 0), 
    \bv_2 = (-1, 0), 
    \bv_3 = (0, +1), 
    \bv_4 = (0, +\tfrac{1}{2}).
\end{align*}
The convex hull $\calP$ is clearly a triangle formed by $\{\bv_1,\bv_2,\bv_3\}$.
By \cref{clm:theory:facet},
for each facet of $\calP$,
there is a representable $2$-permutation dominated by the vectors on the facet.
However, the query vector $\bq = (0,1)$ produces
a $2$-permutation dominated by $\bv_3$ and $\bv_4$, even though $\bv_4$ lies on no facet of $\calP$.
We leave the complete characterization of representable $d$-permutations as an open problem.
\end{example}

\subsection{Mechanism behind Popularity Bias}
\label{subsec:theory:popularity}
Subsequently, we study the mechanism behind popularity bias and its effect on the space of representable rankings.
Suppose we wish a query vector $\bq \in \R^d$ to ensure that
the embedding of $|P|$ popular items denoted
$P \triangleq \{\bp_1, \ldots, \bp_{|P|}\} \subset \R^d$
ranks higher than
the embedding of $|L|$ long-tail items denoted
$L \triangleq \{\bl_1, \ldots, \bl_{|L|}\} \subset \R^d$; that is,
it must hold that
$
    \langle \bq, \bp_i \rangle > \langle \bq, \bl_j \rangle \text{ for all } i\in[|P|], j\in[|L|].
$
Let $\calQ(P,L)$ be the closure of the set of all such vectors;\footnote{
We take the closure for the sake of analysis.
} namely,
\begin{align*}
    \calQ(P,L) \triangleq
    \{
        \bq \in \R^d :
        \langle \bq, \bp_i \rangle \geq \langle \bq, \bl_j \rangle, \; \forall i\in[|P|], j \in [|L|]
    \}.
\end{align*}
We can easily decide whether $\calQ(P,L)$ is empty or not as follows.
\begin{observation}
\label{obs:theory:divide}
$\calQ(P,L) \neq \emptyset$
if and only if
there exists a hyperplane that divides $P$ and $L$.
\end{observation}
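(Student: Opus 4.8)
The plan is to read the observation as a verbatim restatement of the separating hyperplane theorem: a query vector $\bq$ satisfying the inequalities that define $\calQ(P,L)$ is nothing but a normal vector to a hyperplane that keeps $P$ and $L$ on opposite sides. The bridge between the two formulations is the elementary equivalence that, for a fixed $\bq$, the family of inequalities $\langle \bq, \bp_i \rangle \geq \langle \bq, \bl_j \rangle$ for all $i,j$ holds if and only if $\min_i \langle \bq, \bp_i \rangle \geq \max_j \langle \bq, \bl_j \rangle$, which is legitimate because both $P$ and $L$ are finite. I would build both directions around this single inequality.

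For the \emph{only if} direction I would take any $\bq \in \calQ(P,L)$ and set $c \triangleq \max_j \langle \bq, \bl_j \rangle$; the displayed inequality then gives $\langle \bq, \bp_i \rangle \geq c \geq \langle \bq, \bl_j \rangle$ for all $i,j$, so the hyperplane $\{\x : \langle \bq, \x \rangle = c\}$ divides $P$ and $L$. For the \emph{if} direction I would start from a hyperplane $\{\x : \langle \mathbf{a}, \x \rangle = c\}$ dividing $P$ and $L$; after possibly replacing $\mathbf{a}$ by $-\mathbf{a}$ so that $P$ lies in the halfspace $\langle \mathbf{a}, \x\rangle \geq c$, chaining $\langle \mathbf{a}, \bp_i \rangle \geq c \geq \langle \mathbf{a}, \bl_j \rangle$ through the common threshold $c$ yields $\langle \mathbf{a}, \bp_i \rangle \geq \langle \mathbf{a}, \bl_j \rangle$, i.e.\ $\mathbf{a} \in \calQ(P,L)$, so $\calQ(P,L)$ is nonempty.

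The one point that genuinely needs care is the degenerate normal $\bq = \mathbf{0}$, which formally belongs to the $\geq$-set defining $\calQ(P,L)$ yet does not describe an honest hyperplane; I would therefore fix once and for all that a \emph{hyperplane} carries a nonzero normal, so that the right-hand side of the observation is a genuine separation statement and not a triviality. Apart from this bookkeeping the two implications above are self-contained and need no deep input. If one additionally wants to relate nonemptiness to the geometry of $P$ and $L$—and to pin down the strict, tie-free regime whose closure is the set displayed in the paper—I would lift everything to convex hulls: by linearity $\min_i \langle \bq, \bp_i\rangle \geq \max_j \langle \bq, \bl_j\rangle$ is equivalent to $\langle \bq, \x\rangle \geq \langle \bq, \mathbf{y}\rangle$ for all $\x \in \conv(P)$ and $\mathbf{y}\in\conv(L)$, so a nonzero separating $\bq$ exists exactly when the relative interiors of $\conv(P)$ and $\conv(L)$ are disjoint, and a strictly separating one exists exactly when $\conv(P)\cap\conv(L)=\emptyset$, by the separating hyperplane theorem. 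I expect this strict-versus-degenerate boundary, rather than either implication itself, to be the only real obstacle.
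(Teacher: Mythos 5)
Your two implications are correct, and they are the intended argument: the paper states this as an \emph{Observation} with no proof at all, treating the correspondence ``query vector $\leftrightarrow$ normal vector of a dividing hyperplane'' as immediate, which is precisely your $\min/\max$ threshold argument. So there is no methodological divergence to report; the substance of your proposal is the proof the authors omitted.

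Two refinements on the part you flagged as delicate. First, the $\bq = \mathbf{0}$ wrinkle is real, but your fix (declaring that hyperplanes carry nonzero normals) repairs only the right-hand side of the equivalence. Since $\mathbf{0}$ satisfies every inequality $\langle \bq, \bp_i \rangle \geq \langle \bq, \bl_j \rangle$, the set $\calQ(P,L)$ as literally defined is \emph{never} empty, so the left-hand side is vacuously true no matter how you define hyperplanes, and the literal ``iff'' still fails. The statement becomes non-trivial only if one reads ``$\calQ(P,L) \neq \emptyset$'' as ``$\calQ(P,L)$ contains a nonzero vector'' (equivalently, works with the strict-inequality set whose closure the paper takes, as you do in your last paragraph); under that reading your two implications go through verbatim. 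Second, a small caution on your closing claim: existence of a nonzero $\bq$ satisfying the weak inequalities is \emph{not} equivalent to disjointness of the relative interiors of $\conv(P)$ and $\conv(L)$ --- take $P = L = \{\bv\}$, where any nonzero $\bq$ weakly separates with threshold $\langle \bq, \bv \rangle$ even though the relative interiors coincide. That equivalence is the \emph{proper} separation theorem, which additionally forbids both sets from lying inside the separating hyperplane; for the weak notion your main proof uses, no such hull-geometric characterization holds.
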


Now, it is considered that the query vectors are chosen only from $\calQ(P,L)$.
What type of item vector would always rank higher than $L$?
We define $\calS(P,L)$ as the closure of the set of vectors in $\R^d$ that rank higher than all vectors of $L$
provided that $\bq \in \calQ(P,L)$; namely,
\begin{align*}
    \calS(P,L) \triangleq
    \{
        \bs \in \R^d :
        \langle \bq, \bs \rangle \geq \langle \bq, \bl_j \rangle, \; \forall j \in [|P|], \bq \in \calQ(P,U)
    \}.
\end{align*}
First, it is observed that $\calS(P,L)$ is convex;
in fact,
for any vector $\bs = \mu \bs_1 + (1-\mu) \bs_2$ such that
$\bs_1, \bs_2 \in \calS(P,L)$ and $\mu \in [0,1]$,
we have 
\begin{align*}
    \langle \bq, \bs \rangle
    & = \mu \langle \bq, \bs_1 \rangle + (1-\mu) \langle \bq, \bs_2 \rangle \\
    & \geq \mu \max_{j \in [|P|]}\langle \bq, \bl_j \rangle + (1-\mu) \max_{j \in [|P|]}\langle \bq, \bl_j \rangle
    = \max_{j \in [|P|]}\langle \bq, \bl_j \rangle
\end{align*}
whenever $\bq \in \calQ(P,L)$.
In particular, any convex combination of $P$ ranks higher than $L$ if $\bq \in \calQ(P,L)$.

To characterize the structure of $\calS(P,L)$,
we further introduce some definitions and notations.
For two subsets $\calA$ and $\calB$ of $\R^d$,
the \emph{Minkowski sum}, denoted $\calA \oplus \calB$, is defined as
the set of all sums of a vector from $\calA$ and a vector from $\calB$; namely,
$
\calA \oplus \calB \triangleq \{ \mathbf{a} + \mathbf{b} : \mathbf{a} \in \calA, \mathbf{b} \in \calB \}.
$
The \emph{half-line} for a vector $\bv \in \R^d$ is defined as
$
    \overrightarrow{\bv} \triangleq \{ \lambda \bv : \lambda \geq 0 \}.
$
The \emph{polyhedral cone} generated by $k$ vectors $\bv_1, \ldots, \bv_k \in \R^d$ is defined as
the conical hull of $\bv_1, \ldots, \bv_k$; namely,
\begin{align*}
    \PC(\bv_1, \ldots, \bv_k)
    \triangleq \Bigl\{ \sum_{i \in [k]}\lambda_i \bv_i : \forall \lambda_i \geq 0 \Bigr\}
    = \bigoplus_{i \in [k]} \overrightarrow{\bv_i}.
\end{align*}
Note that $\PC(\bv_1, \ldots, \bv_k)$ is a convex cone.
For $k+1$ vectors $\bv_1, \ldots, \bv_k, \allowbreak \be \in \R^d$,
the polyhedral cone generated by $\bv_1, \ldots, \bv_k$ with $\be$ as an extreme point is defined as
\begin{align*}
    \PC(\bv_1, \ldots, \bv_k; \be)
    & \triangleq \be \oplus \PC(\bv_1, \ldots, \bv_k).
\end{align*}
By the Minkowski--Weyl theorem \cite{minkowski1896geometrie,weyl1935elementare},
any polyhedral cone can be represented as
the intersection of a finite number of half-spaces with $\be$ on their boundary.

We first establish a complete characterization of $\calS(P,L)$
when $L$ is a singleton consisting of a vector $\bl \in \R^d$;
see also \cref{fig:theory:multi}.
\begin{theorem}
\label{lem:theory:singleton}
It holds that
$
    \calS(P,\{\bl\}) = \PC(\{\bp_1-\bl, \ldots, \bp_k-\bl\}; \bl).
$
\end{theorem}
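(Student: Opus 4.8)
The plan is to translate the whole configuration by $-\bl$ so that the long-tail item sits at the origin, and then recognize $\calQ(P,\{\bl\})$ and $\calS(P,\{\bl\})$ as a dual cone and its bidual, respectively. Concretely, I would set $C \triangleq \PC(\{\bp_1-\bl, \ldots, \bp_k-\bl\})$, so that the right-hand side of the claim is exactly $\bl \oplus C$. Since $L = \{\bl\}$ is a singleton, the defining condition $\langle \bq, \bp_i \rangle \geq \langle \bq, \bl \rangle$ rewrites as $\langle \bq, \bp_i - \bl \rangle \geq 0$, so $\calQ(P,\{\bl\})$ is precisely the dual cone $C^{*} \triangleq \{\bq \in \R^d : \langle \bq, \bx \rangle \geq 0 \text{ for all } \bx \in C\}$ (using that $\bq$ pairs nonnegatively with every generator of $C$ iff it does so with the whole cone). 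Likewise, $\bs \in \calS(P,\{\bl\})$ iff $\langle \bq, \bs - \bl \rangle \geq 0$ for every $\bq \in C^{*}$, i.e.\ iff $\bs - \bl \in C^{**}$. Hence $\calS(P,\{\bl\}) = \bl \oplus C^{**}$, and the claim reduces to the single identity $C^{**} = C$.

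For the easy inclusion $\bl \oplus C \subseteq \calS(P,\{\bl\})$ I would argue directly: if $\bs - \bl = \sum_{i} \lambda_i (\bp_i - \bl)$ with all $\lambda_i \geq 0$, then for any $\bq \in \calQ(P,\{\bl\})$ we have $\langle \bq, \bs - \bl \rangle = \sum_{i} \lambda_i \langle \bq, \bp_i - \bl \rangle \geq 0$ term by term, so $\bs$ ranks no lower than $\bl$. This is just the conical refinement of the convexity computation already carried out for general $\calS(P,L)$, and needs no separation argument.

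The substantive inclusion is $\calS(P,\{\bl\}) \subseteq \bl \oplus C$, i.e.\ $C^{**} \subseteq C$. Here I would invoke the bipolar theorem: for a \emph{closed} convex cone $C$ one has $C^{**} = C$. The only hypothesis to verify is that $C$ is closed, which holds because $C$ is finitely generated; this is exactly the content of the Minkowski--Weyl theorem already cited in the excerpt (a finitely generated cone is polyhedral, hence closed). Equivalently, one can bypass the bipolar theorem and appeal to Farkas' lemma directly: if $\bs - \bl \notin C$, then there exists a separating $\bq$ with $\langle \bq, \bp_i - \bl \rangle \geq 0$ for all $i$ yet $\langle \bq, \bs - \bl \rangle < 0$; such a $\bq$ lies in $\calQ(P,\{\bl\})$ and witnesses $\bs \notin \calS(P,\{\bl\})$, which is the contrapositive of what we want.

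The main obstacle is precisely this last inclusion, and it is where closedness of the finitely generated cone $C$ is indispensable: the forward (easy) inclusion by itself only certifies that generated points lie in $\calS$, and without the Farkas/bipolar step one cannot rule out that $\calS$ contains extra vectors outside $\bl \oplus C$. Everything else --- the translation, the two dualizations, and the term-by-term nonnegativity --- is routine once the dual-cone language is in place.
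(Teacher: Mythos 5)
Your proposal is correct, and at its core it follows the same skeleton as the paper's proof: the easy inclusion is verified by exactly the same term-by-term nonnegativity computation, and the hard inclusion is a separation argument that ultimately rests on the polyhedrality of the finitely generated cone $C = \PC(\{\bp_1-\bl,\ldots,\bp_k-\bl\})$. The difference is one of packaging. You reformulate $\calQ(P,\{\bl\})$ as the dual cone $C^{*}$ and $\calS(P,\{\bl\})$ as $\bl \oplus C^{**}$, so the theorem collapses to the single identity $C^{**}=C$, which you then discharge by citing the bipolar theorem (with closedness of $C$ from Minkowski--Weyl) or, equivalently, Farkas' lemma. The paper instead inlines precisely this step: it invokes Minkowski--Weyl to write $\PC(\{\bp_1-\bl,\ldots,\bp_k-\bl\};\bl)$ as an intersection of half-spaces with $\bl$ on their boundaries, picks a half-space $\calH$ violated by $\bs$, and exhibits the explicit witness query $\bq = -\bn$ (the inward normal of $\calH$), checking by hand that $-\bn \in \calQ(P,\{\bl\})$ and $\langle -\bn, \bs\rangle < \langle -\bn, \bl\rangle$. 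So both proofs lean on the same Minkowski--Weyl fact, just at different points: you use it to certify closedness so a black-box duality theorem applies; the paper uses its half-space representation to construct the separating query concretely. Your version buys brevity and a clean conceptual identification of $\calQ$ and $\calS$ as dual and bidual cones; the paper's version buys self-containedness and an explicit query vector witnessing $\bs \notin \calS(P,\{\bl\})$, which fits its expository goal of showing \emph{which} user embedding demotes a non-near-popular item. Your Farkas variant is arguably the most economical complete argument.
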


We further show that $\calS(P,L)$ becomes very large as $P$ or $L$ grows.
\begin{theorem}
\label{thm:theory:multi}
It holds that
\begin{align}
    \calS(P,L)
    & \supseteq
    \Bigl( \bigcap_{j \in [|L|]} \calS(P,\{\bl_j\}) \Bigr) \oplus
    \bigoplus_{i \in [|P|], j \in [|L|]} \overrightarrow{\bp_i - \bl_j} \label{eq:theory:multi} \\
    & \supseteq \conv(\{\bp_1, \ldots, \bp_{|P|}\}) \oplus \bigoplus_{i \in [|P|], j \in [|L|]} \overrightarrow{\bp_i - \bl_j}. \label{eq:theory:multi-conv}
\end{align}
\end{theorem}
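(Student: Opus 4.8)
The plan is to establish the two inclusions \eqref{eq:theory:multi} and \eqref{eq:theory:multi-conv} separately, using the single-item characterization of \cref{lem:theory:singleton} together with the convexity of $\calS(P,L)$ already observed above. The one structural fact that drives the whole argument is the monotonicity of the query region in the long-tail set: since $\calQ(P,L) = \{\bq : \langle \bq, \bp_i\rangle \geq \langle \bq, \bl_j\rangle,\ \forall i,j\} = \bigcap_{j \in [|L|]} \calQ(P,\{\bl_j\})$, we have $\calQ(P,L) \subseteq \calQ(P,\{\bl_j\})$ for every $j$. Consequently, any inequality that holds for \emph{all} query vectors in the smaller region $\calQ(P,\{\bl_j\})$ continues to hold for all query vectors in $\calQ(P,L)$, which is exactly what couples the single-item sets $\calS(P,\{\bl_j\})$ to the multi-item set $\calS(P,L)$.

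For the first inclusion \eqref{eq:theory:multi}, I would take an arbitrary element of the right-hand side, written as $\bs = \bt + \sum_{i \in [|P|], j \in [|L|]} \lambda_{i,j}(\bp_i - \bl_j)$ with $\bt \in \bigcap_{j}\calS(P,\{\bl_j\})$ and all $\lambda_{i,j}\geq 0$, and verify $\bs \in \calS(P,L)$ straight from the definition. Fixing an arbitrary $\bq \in \calQ(P,L)$ and a target index $j' \in [|L|]$, I would split $\langle \bq, \bs\rangle = \langle \bq, \bt\rangle + \sum_{i,j}\lambda_{i,j}\langle \bq, \bp_i - \bl_j\rangle$ and bound the two pieces. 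The first piece obeys $\langle \bq, \bt\rangle \geq \langle \bq, \bl_{j'}\rangle$, because $\bt \in \calS(P,\{\bl_{j'}\})$ and, by the monotonicity above, $\bq \in \calQ(P,L) \subseteq \calQ(P,\{\bl_{j'}\})$. Each summand of the second piece is nonnegative, since $\langle \bq, \bp_i - \bl_j\rangle = \langle \bq, \bp_i\rangle - \langle \bq, \bl_j\rangle \geq 0$ is precisely a defining constraint of $\calQ(P,L)$. Adding the bounds yields $\langle \bq, \bs\rangle \geq \langle \bq, \bl_{j'}\rangle$, and since $\bq$ and $j'$ were arbitrary we conclude $\bs \in \calS(P,L)$. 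The intersection over $j$ in the choice of $\bt$ is exactly what permits this argument to run simultaneously for every target index $j'$.

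For the second inclusion \eqref{eq:theory:multi-conv}, by monotonicity of the Minkowski sum it suffices to show $\conv(\{\bp_1,\ldots,\bp_{|P|}\}) \subseteq \bigcap_{j}\calS(P,\{\bl_j\})$, which is the convex-combination observation already recorded before the statement. Indeed, for any $\bs = \sum_i \mu_i \bp_i$ with $\mu_i \geq 0$ and $\sum_i \mu_i = 1$, and any $\bq \in \calQ(P,\{\bl_j\})$, linearity gives $\langle \bq, \bs\rangle = \sum_i \mu_i \langle \bq, \bp_i\rangle \geq \sum_i \mu_i \langle \bq, \bl_j\rangle = \langle \bq, \bl_j\rangle$, so $\bs \in \calS(P,\{\bl_j\})$ for every $j$. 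Feeding this containment into \eqref{eq:theory:multi} completes the chain of inclusions.

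I do not anticipate a genuine obstacle: both inclusions reduce to linearity of the inner product plus the containment $\calQ(P,L) \subseteq \calQ(P,\{\bl_j\})$. The only points demanding care are bookkeeping ones — keeping the universal quantifier over the target index $j'$ aligned with the intersection defining $\bt$, and confirming that the closures taken in the definitions of $\calQ$ and $\calS$ are harmless, since each set is an intersection of closed half-spaces (one per pair $(\bq,j)$) and is therefore already closed and convex. A tight, equality version of the theorem would be substantially harder, requiring a converse argument through \cref{lem:theory:singleton} and the Minkowski--Weyl description of the generated cones; but the statement asks only for the inclusions, which the argument above delivers.
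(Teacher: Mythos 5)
Your proof is correct and follows essentially the same route as the paper's: both arguments rest on the containment $\calQ(P,L) = \bigcap_{j \in [|L|]} \calQ(P,\{\bl_j\})$ to transfer the single-item guarantees on $\bt \in \bigcap_j \calS(P,\{\bl_j\})$ to the multi-item query region, then use linearity of the inner product to absorb the nonnegative terms $\lambda_{i,j}\langle \bq, \bp_i - \bl_j\rangle$, and both obtain \eqref{eq:theory:multi-conv} from the observation that $\conv(\{\bp_1,\ldots,\bp_{|P|}\})$ lies in each $\calS(P,\{\bl_j\})$. The only difference is organizational: the paper first proves $\bigcap_j \calS(P,\{\bl_j\}) \subseteq \calS(P,L)$ as a standalone set inclusion and then adds the conic part, whereas you verify membership of a generic element of the full Minkowski sum directly, quantifying over the target index $j'$; the mathematical content is identical.
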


\begin{remark}
\cref{fig:theory:multi} gives an example of \cref{lem:theory:singleton,thm:theory:multi}.
Let $P = \{\bp_1,\bp_2,\bp_3\}$ and $L = \{\bl_1,\bl_2\}$.
$\calS(P,\{\bl_1\})$ and $\calS(P,\{\bl_2\})$
are denoted by northeast (blue) lines and northwest (red) lines, respectively.
The intersection of $\calS(P,\{\bl_1\})$ and $\calS(P,\{\bl_2\})$ is crosshatched.
$\calS(P,L)$ is a convex cone defined by two arrows, which includes $\calS(P,\{\bl_1\}) \cap \calS(P,\{\bl_2\})$.
\end{remark}
$\calS(P,L)$ particularly contains $\conv(\{\bp_1, \ldots, \bp_{|P|}\})$, and
a polyhedral cone defined by $\bigoplus_{i \in [|P|], j \in [|L|]} \overrightarrow{\bp_i - \bl_j}$
becomes bigger as either
the number of popular items or the number of long-tail items is increased.
In particular, \cref{eq:theory:multi-conv} is 
\emph{monotone} in $P$ and $L$ (with respect to the inclusion property).
We leave as open whether the inclusion in \cref{eq:theory:multi} is strict or not.

Here, we discuss the dimensionality of $\calS(P,L)$.
If $\dim(\calS(P,L))$ is less than $d$,
a vector randomly taken from $\R^d$ lies on $\calS(P,L)$ with probability $0$;
thus, we would not be concerned about popularity bias.
Unfortunately, $\dim(\calS(P,L))$ can be bounded only by 
\begin{align*}
    \dim\Bigl( \bigcap_{j} \calS(P,\{\bl_j\}) \Bigr)
    + \dim\Bigl(\bigoplus_{i,j} \overrightarrow{\bp_i - \bl_j}\Bigr)
    \leq (|P|+1)|L|,
\end{align*}
if the inclusion in \cref{eq:theory:multi} is an equality.
Thus, we cannot inherently avoid popularity bias
under dot-product models.

However, a characterization of $\calS(P,L)$ by a cone-like structure implies
the effect of increasing the dimensionality on popularity bias reduction.
Suppose that $\calS(P,L)$ is a $d$-dimensional right circular cone of half-angle $\theta$.
Then, the event in which \emph{a random item vector uniformly chosen from $\R^d$ belongs to $\calS(P,L)$}
occurs with a probability equal to
the area ratio of
a spherical cap of polar angle $\theta$ in the $(d-1)$-dimensional unit sphere
to
the $(d-1)$-dimensional unit sphere.
By \cite{li2011concise}, this ratio is
$
    \frac{\frac{1}{2} A_{d-1} \cdot I_{\sin^2 \theta}\left(\frac{d-1}{2}, \frac{1}{2}\right)}{A_{d-1}}
    = \frac{1}{2} I_{\sin^2 \theta}\left(\frac{d-1}{2}, \frac{1}{2}\right),
$
where $A_{d-1}$ is the area of the $(d-1)$-dimensional unit sphere and
$I_x(a,b)$ is the regularized incomplete beta function.
By using the asymptotic expansion of $I_x(a,b)$ for fixed $b$ and $x$ \cite{temme1996special} and

an approximation of the beta function $\mathrm{B}(a,b) \approx \Gamma(b) \cdot a^{-b}$,
we have
$
    I_{\sin^2 \theta}\left(\frac{d-1}{2}, \frac{1}{2}\right)
    \approx \frac{(\sin^2 \theta)^{\frac{d-1}{2}} \cdot (\cos^2 \theta)^{-\frac{1}{2}}}{\frac{d-1}{2} \cdot \Gamma(\frac{1}{2})(\frac{d-1}{2})^{-\frac{1}{2}}}
    = \Theta\left(\frac{(\sin \theta)^{d-1}}{\sqrt{d-1}}\right);
$
i.e., the probability of this event decays \emph{exponentially} in $d$.

In the remainder of this section, we prove \cref{lem:theory:singleton,thm:theory:multi}.

\begin{proof}[Proof of \cref{lem:theory:singleton}]
We first show
$\PC(\{\bp_1-\bl, \ldots, \bp_{|P|}-\bl\}; \bl) \subseteq \calS(P,\{\bl\})$.
Let $\bs$ be any vector in $\PC(\{\bp_1-\bl, \ldots, \bp_{|P|}-\bl\}; \bl)$;
hence, $\bs = \bl + \sum_{i \in [|P|]}\lambda_i (\bp_i - \bl)$
for some $\lambda_i \geq 0$.
We then have that for every $\bq \in \calQ(P,\{\bl\})$,
\begin{align*}
    \langle \bq, \bs \rangle
    = \langle \bq, \bl \rangle +
        \sum_{i \in [|P|]} \lambda_i \cdot \Bigl(
        \underbrace{\langle \bq, \bp_i \rangle - \langle \bq, \bl \rangle}_{\geq 0}
    \Bigr)
    \geq \langle \bq, \bl \rangle.
\end{align*}
Thus, $\bs \in \calS(P,\{\bl\})$.

We then show (the contraposition of) that
$\calS(P,\{\bl\}) \subseteq \PC(\{\bp_1-\bl, \ldots, \bp_{|P|}-\bl\}; \bl)$.
Let $\bs$ be a vector not in $\PC(\{\bp_1-\bl, \ldots, \bp_{|P|}-\bl\}; \bl)$.
By the Minkowski--Weyl theorem,
there must be a half-space $\calH$ that has $\bl$ on its boundary,
contains $\PC(\{\bp_1-\bl, \ldots, \bp_{|P|}-\bl\}; \bl)$, and does not include $\bs$.
Note that the boundary of $\calH$ forms a supporting hyperplane.
Letting $\bn$ be an outward normal vector to $\calH$,
we can express $\bs$ as $\bs = \bh + \lambda \bn$, where
$\lambda > 0$ and $\bh$ is a vector lying on the boundary of $\calH$.
By the definition of $\calH$ and $\bn$,
it holds that $\langle -\bn, \bl\rangle \leq \langle -\bn, \bp_i \rangle$ for all $i \in [|P|]$;
i.e., $-\bn \in \calQ(P,\{\bl\})$.
Thinking of $-\bn$ as a query vector, we obtain
\begin{align*}
    \langle -\bn, \bs \rangle
    = \langle -\bn, \bh \rangle + \lambda \cdot \langle -\bn, \bn \rangle
    < \langle -\bn, \bh \rangle
    = \langle -\bn, \bl \rangle,
\end{align*}
where the last equality is because
both $\bh$ and $\bl$ lie on the boundary of $\calH$,
implying that $\bs \not\in \calS(P,\{\bl\})$.
This completes the proof.
\end{proof}

\begin{proof}[Proof of \cref{thm:theory:multi}]
First, we show that
$\calS(P,L)$ contains $\bigcap_{j \in [|L|]} \calS(P,\{\bl_j\})$.
Observing that 
$\calQ(P,L) = \bigcap_{j \in [|L|]} \calQ(P,\{\bl_j\})$,
we derive the following:
\begin{align*}
    \bigcap_{j \in [|L|]} \calS(P,\{\bl_j\})
    & = \bigcap_{j \in [|L|]} \{
        \bs : \langle \bq, \bs \rangle \geq \langle \bq, \bl_j \rangle, \; \forall \bq \in \calQ(P,\{\bl_j\})
    \} \\
    & \subseteq \bigcap_{j \in [|L|]} \{
        \bs : \langle \bq, \bs \rangle \geq \langle \bq, \bl_j \rangle, \; \forall \bq \in \calQ(P,L)
    \} \\
    & = \{
        \bs : \langle \bq, \bs \rangle \geq \langle \bq, \bl_j \rangle,
            \; \forall j \in [|L|], \forall \bq \in \calQ(P,L)
    \},
\end{align*}
which is equal to $\calS(P,L)$.
Now, let $\bs$ be in the right-hand side of \cref{eq:theory:multi}; i.e.,
it holds that $\bs = \bs' + \sum_{i,j} \lambda_{i,j}(\bp_i-\bl_j) $ for
some $\bs' \in \bigcap_{j \in [|L|]} \calS(P,\{\bl_j\})$ and $\lambda_{i,j} \geq 0$.
For any $\bq \in \calQ(P,L)$, we have
\begin{align*}
    \langle \bq, \bs \rangle
    = \langle \bq, \bs' \rangle + \sum_{i,j} \lambda_{i,j} \cdot
        \Bigl( \underbrace{\langle \bq,\bp_i \rangle - \langle \bq,\bl_j \rangle}_{\geq 0} \Bigr)
    \geq \langle \bq, \bs' \rangle \geq \max_{j \in [|L|]} \{\langle \bq, \bl_j \rangle\}, 
\end{align*}
implying $\bs \in \calS(P,L)$, which concludes the proof.
\end{proof}

\clearpage
\balance
\bibliographystyle{ACM-Reference-Format}
\bibliography{head}
\end{document}